%% file: root.tex
\documentclass[letterpaper, 10 pt, conference]{ieeeconf}  

\IEEEoverridecommandlockouts                              
\input{utils/preamble}

\usepackage{graphics} 
\usepackage{color}
\newcommand{\revisedText}[2]{#2}

\title{\LARGE \bf Stochastic Nonlinear Model
Predictive Control with\\ Gaussian Mixture Uncertainty Propagation
}
\author{Konstantinos Prattis, Luca Laurenti$^{1}$, and Azita Dabiri
\thanks{
All authors are with the Delft Center for Systems and Control, Delft University of Technology, Delft, The Netherlands \texttt{\{k.prattis, l.laurenti, a.dabiri\}@tudelft.nl,}$^{1}$is also with AI4I, Turin, Italy.}
}

\begin{document}
\setlength{\abovedisplayskip}{4pt plus 2pt minus 1pt}
\setlength{\belowdisplayskip}{4pt plus 2pt minus 1pt}
\setlength{\abovedisplayshortskip}{2pt plus 1pt}
\setlength{\belowdisplayshortskip}{2pt plus 1pt}

\bstctlcite{IEEEexample:BSTcontrol}
\maketitle
\thispagestyle{empty}
\pagestyle{empty}

\newpage
\begin{abstract}
We propose a novel Stochastic Nonlinear Model Predictive Control (SNMPC) framework for nonlinear systems with additive noise. Building on recent advances in nonlinear uncertainty propagation, we show that the state distribution of the system can be tractably approximated over time by Gaussian mixture distributions, with formal error bounds in Wasserstein distance. This representation yields closed-form expressions for expected costs and chance constraints, which become exact for affine constraints and exact up to a constant for quadratic costs. Consequently, the resulting control problem can be solved efficiently via nonlinear programming, while providing formal open-loop guarantees of correctness and asymptotic optimality. Experiments on a set of benchmarks demonstrate that the proposed approach compares favorably with existing methods in nonlinear settings with multi-modal disturbances, where standard approaches lead to poorly scaled solutions and unsafe or overly conservative control actions.\looseness=-1
\end{abstract}

\section{INTRODUCTION}
\input{sections/introduction}
\section{PRELIMINARIES}
\vspace{-1.5mm}
\input{sections/preliminaries/notation}
\input{sections/preliminaries/wasser}
\section{PROBLEM FORMULATION}
\input{sections/problem_form}
\section{FORMAL UNCERTAINTY PROPAGATION WITH GAUSSIAN MIXTURES} \label{sec_mix_prop_all}
\input{sections/gmm_unc_prop}
\section{TRACTABLE SNMPC REFORMULATION} \label{sec_reform_all}
\input{sections/mixmpc/cost_and_cdf_derv}
\subsection{Resulting optimization problem}
\input{sections/mixmpc/formulation}
\section{IMPLEMENTATION \& NUMERICAL RESULTS} \label{sec_imlp_all}
\input{sections/results/implementation}

\subsection{Simulations and numerical examples}
\input{sections/results/simulations}

\section{CONCLUSION AND FUTURE WORK} \label{sec_conc}
\input{sections/conclusion}

\bibliographystyle{IEEEtran}
\bibliography{IEEEabrv, references}

\end{document}

%% file: utils/preamble.tex
\usepackage{graphicx} 
\usepackage{amsmath} 
\usepackage{amssymb}  

\usepackage[long]{optidef} 

\usepackage{booktabs}

\usepackage[hidelinks]{hyperref}
\usepackage{url}
\usepackage{cite}

\usepackage{utils/commands}

%% file: sections/introduction.tex
Modern autonomous systems are inherently stochastic \cite{unifyingstochastic}. On one hand, the rapid evolution of robots, drones, and autonomous vehicles has accelerated the transition from controlled factory environments to real-world settings characterized by significant uncertainty. On the other hand, the increasing use of data-driven methods and machine learning to learn models and policies enables the practical solution of complex tasks, but introduces additional uncertainty due to the reliance on statistical approximations and finite data \cite{Hewing2025Learning-BasedControl}. Finally, noise in sensor measurements and process disturbances necessitates the use of state estimation techniques, which further contribute to the stochastic nature of the system \cite{doucetfg01}.
Consequently, achieving optimal performance and provably safe operation in real-world settings often requires explicit modeling of uncertainty and control algorithms capable of accounting for stochastic effects.\looseness=-1

Model Predictive Control (MPC) is one of the most widely used control frameworks in modern control engineering \cite{book}. In the presence of bounded uncertainty, Robust MPC is a variant of standard MPC, which can provide safety guarantees, but often at the cost of conservatism \cite{9044326}. A less conservative alternative, applicable also to unbounded disturbances, is Stochastic Model Predictive Control (SMPC) \cite{Mesbah2016StochasticResearch}. As an extension of deterministic MPC, SMPC operates in a receding horizon fashion using a stochastic system model, but optimizes the expected cost while enforcing probabilistic constraints \cite{Farina2016StochasticReview}.
While SMPC for linear systems with Gaussian noise is a well-understood problem \cite{Farina2016StochasticReview}, its extension to nonlinear systems and non-Gaussian uncertainties remains a challenging open problem, which is the object of this paper. 

Existing approaches for nonlinear SMPC typically rely on approximations that lack explicit error bounds and therefore lack guarantees of performance or correctness \cite{Mesbah2016StochasticResearch}. This difficulty arises primarily from the intractability of propagating uncertainty through nonlinear dynamics \cite{LANDGRAF2023100905}. For example, methods such as those based on moments propagation (e.g. \cite{Kabzan2019Learning-basedRacing}), approximate the state distribution by propagating only its first few moments via local linearization or the Unscented Transform (UT) \cite{1997SPIE.3068..182J}. Even though success of moment-matching has been empirically verified \cite{Kabzan2019Learning-basedRacing}, such heuristics neglect higher-order effects and additional approximation errors are inevitably introduced.
%
In contrast, more mathematically rigorous methods come with the cost of high computational complexity. Methods based on Polynomial Chaos Expansion (PCE) (e.g. \cite{Fagiano2012NonlinearExpansions}) prove convergence in the $L_2$ sense, but suffer from Gibbs non-smoothness phenomena and the curse of dimensionality \cite{2002SJSC...24..619X}. Similarly, scenario-based approaches (e.g. \cite{SCHILDBACH20143009}) are naturally accompanied by distribution-free convergence guarantees, but require a high sample complexity to safeguard from tail events and reach suitable confidence levels for safety-critical systems \cite{1632303}.

To address these challenges, this paper leverages recent advances in nonlinear uncertainty propagation \cite{Adams2025FormalNoise} to derive a tractable SMPC reformulation with formal guarantees. We build on the framework proposed in \cite{Adams2025FormalNoise} to approximate the state distribution of the system over time using Gaussian mixtures, equipped with explicit error bounds in Wasserstein distance \cite{CedricVillaniOptimalNew}. \revisedText{blue}{Although the Wasserstein distance is central to our approach, our setting differs from standard distributionally robust MPC (e.g.  \cite{mcallisterDistributionallyRobustModel2025}), where an ambiguity set models genuine uncertainty about the disturbance distribution, while here the Wasserstein radius mirrors a controlled approximation error. Moreover, while mixture and ensemble-based methods, e.g. Gaussian-sum filtering \cite{ALSPACH1974271} and particle filtering \cite{doucetTutorialParticleFiltering}, and sigma-point methods \cite{1997SPIE.3068..182J}, have been used for prediction, our construction differs as we also provide a  formal quantification of the resulting distributional error.}

The derived Gaussian mixture representations enable the derivation of tractable locally-analytic expressions for both chance constraints and expected cost, obtained via local linearization around each mixture component. Using tools from optimal transport theory \cite{CedricVillaniOptimalNew}, we establish formal guarantees for the proposed reformulation and quantify the conservativeness introduced by the approximations. Furthermore, we show that, as the number of mixture components increases, the solution of the resulting optimization problem converges to that of the original problem.
The resulting optimization problem is solved efficiently using a Sequential Quadratic Programming (SQP) scheme. Empirical results on two nonlinear SMPC benchmarks demonstrate the effectiveness of the proposed approach under both unimodal and bimodal disturbance distributions.
In summary, the main contributions of this work are:\looseness=-1
\begin{itemize}
\item a tractable SMPC reformulation based on Gaussian mixture uncertainty propagation, amenable to efficient solution via nonlinear programming,
\item formal guarantees on approximation error and constraint satisfaction \revisedText{blue}{for the case of affine constraints, quadratic costs and Lipschitz dynamics}, including bounds on the conservativeness of the reformulation and convergence guarantees to the original stochastic problem,
\item empirical validation on nonlinear benchmarks demonstrating competitive performance.
\end{itemize}

%% file: sections/preliminaries/notation.tex
We use $\mathcal{W}^m \subset \R^m$ to denote the $(m-1)$-dimensional simplex $\{w_1, \ldots, w_m \in \R_+ : \sum_{i=1}^{m} w_i = 1\}$.
The set of square symmetric positive definite matrices of size $n$ is denoted by $\mathbb{S}_{++}^n$. For a vector $x \in \R^n$ and a  matrix $A\in \mathbb{S}_{++}^{n}$, we use $\|x\|_A$ for $\sqrt{x^T A x}$.
For a set $\mathcal{X}$, the set $\mathcal{X}^n$ denotes the $n$-times Cartesian product of the set with itself.
For $\mathcal{X} \subseteq \mathbb{R}^n$ we use $\mathcal{P}(\mathcal{X})$ for the set of Borel
probability measures on $\mathcal{X}$, and $\mathcal{P}_\rho(\mathcal{X})$ for those with finite
$\rho$-moments.
We use $\delta_x$ for the Dirac distribution centered at $x \in \mathcal{X}$.
For $Q \in \mathcal{P}(\mathcal{X})$, we use $\mathbb{E}_Q[\cdot]$ for the expectation and $\mathbb{Q}(\cdot)$ for the probability operators. The convolution operator is denoted with '$*$' and the push-forward operator with '$\#$'.
For $\mu \in (\mathbb{R}^{n})^{m}$, $\Sigma \in (\mathbb{S}^{n}_{++})^{m}$ and
$w \in \mathcal{W}_m$, the Gaussian mixture $GM(w,\mu,\Sigma) := \sum_{i=1}^{m} w_i\,
\mathcal{N}(\mu_i,\Sigma_i)$ is the convex combination of the Gaussians
$\mathcal{N}(\mu_i,\Sigma_i)$ with mean $\mu_i$ and covariance $\Sigma_i$; its number of components
is denoted by $|GM(w,\mu,\Sigma)|$.
For the Gaussian mixture with $\Sigma_i= \Sigma_0$, for all $i=1, \ldots, m$, and $\Sigma_0\in \mathbb{S}_{++}^n$, we use the notation $\mathrm{GM}(w, \mu, \Sigma_0)$. The set of all such mixtures in $\R^n$, for a given $\Sigma_0 \in \mathbb{S}_{++}^n$ and $\left|\mathrm{GM}(w, \mu, \Sigma_0)\right|=m$, will be denoted by $\mathcal{G}_m^{\Sigma_0}(\R^n)$.

%% file: sections/preliminaries/wasser.tex
Finally, let $\mathcal{X} \subseteq \R^n$ and $P, Q \in \mathcal{P}_\rho(\mathcal{X})$ be any two probability measures. For any integer $\rho \ge 1$, their \textit{$\rho$-Wasserstein distance} is defined as \cite{CedricVillaniOptimalNew}:
\begin{equation}
    W_\rho(P, Q) = \left(\inf_{\gamma \in \Gamma(P, Q)} \mathbb{E}_{(x, y)\sim\gamma}\big[\|x - y\|^\rho\big] \right)^{\frac{1}{\rho}}
\end{equation}
where $\Gamma(P, Q) \subset \mathcal{P}(\mathcal{X}\times\mathcal{X})$ is the set of all joint distributions with marginals $P$ and $Q$.

%% file: sections/problem_form.tex
We consider discrete-time stochastic systems governed by the following stochastic difference equation:
\begin{equation}
\label{eq_dynamics}
x_{t+1} = f\left(x_t, u_t\right) + \eta_t,
\end{equation}
where $t\in \mathbb{N}_+$ is the discrete time, $x_t \in \mathbb{R}^{n_x}$ is the state of the system at time $t$, $u_t\in \mathbb{R}^{n_u}$ is the control input, and the possibly nonlinear function $f:\R^{n_x}\times\R^{n_u} \rightarrow \R^{n_x}$ in \eqref{eq_dynamics} is assumed to be piecewise differentiable. Full-state measurements are available at every step, and the additive disturbance
$\eta_t \in \mathbb{R}^{n_x}$ is i.i.d. according to a known $P_\eta \in \mathcal{P}(\mathbb{R}^{n_x})$.\looseness=-1 
\begin{assumption} \label{asu_noise_wasser}
The disturbance distribution $P_\eta$ is a Gaussian mixture with equal and diagonal covariance $\Sigma_\eta$ across components, i.e. $P_\eta \coloneqq \mathrm{GM}(w_\eta, \mu_\eta, \Sigma_\eta) \in \mathcal{G}^{\Sigma_\eta}_{m_\eta}(\R^{n_x})$.
\end{assumption}
\begin{remark}
We focus for simplicity on Gaussian mixtures with equal covariances. \revisedText{blue}{However, note that any distribution with finite second moments can be approximated arbitrarily well in $W_2$ by this class \cite{CedricVillaniOptimalNew}, provided both the number of components can grow and the common covariance can vanish. Our framework can easily be extended to account for the resulting Wasserstein error bounds via the techniques in \cite{Adams2025FormalNoise}. In practice however, the curse of dimensionality applies: one would need $\mathcal{O}(\epsilon^{-n_x})$ components to reach a $W_2$ error of order $\epsilon$ \cite{Graf2000FoundationsDistributions}.\looseness=-1}
\end{remark}

We aim to control \eqref{eq_dynamics} using the policy $u_t = \pi(x_t, \theta_t)$, where $\pi:\R^{n_x}\times \Theta \rightarrow \R^{n_u}$ is a fixed differentiable policy
\footnote{
This covers open-loop control, $\pi(x,\theta)=\theta$, and state feedback,
$\pi(x,\theta)=Kx+v$ with $\theta=[\mathrm{vec}(K);v]$ or $\theta=v$~\cite{Kohler2023OnFramework}.
}
parametrized by $\theta_t \in \Theta \subseteq \R^{n_\theta}$. In particular, the goal consists of synthesizing an optimal strategy for selecting the parameter vector $\theta_t$ in a receding-horizon setting, by minimizing a given cost while enforcing
probabilistic state and input constraints over a given finite prediction horizon. Before formally introducing  the problem, we need to first introduce some necessary notation. In particular, we denote by $x_{k|t}$ the predicted state $k$ steps after the observed state $x_t$. Then, $P_{x_{k|t}}\in \mathcal{P}(\R^{n_x})$, the distribution of $x_{k|t}$ given $x_{0|t} = x_t$, is induced by $(\eta_{0|t}, \ldots,\eta_{k-1|t}) \sim P_\eta^k$ and $(\theta_{0|t}, \ldots, \theta_{k-1|t}) \in \Theta^k$ via the recursion $P_{x_{k+1|t}} = \mathbb{E}_{P_{x_{k|t}}}[T(x_{k|t}, \theta_{k|t})]$, where $T:\R^{n_x}\times\Theta\rightarrow \mathcal{G}^{\Sigma_\eta}_{m_\eta}(\R^{n_x})$
is the one-step transition kernel given by \(T(x, \theta) \coloneqq \delta_{f_\pi(x, \theta)} * P_\eta \), with $f_\pi(x, \theta) \coloneqq f(x, \pi(x, \theta))$. Note that, because of the additivity of the noise in \eqref{eq_dynamics}, $T$ is a Gaussian mixture distribution. 
We can now define differentiable functions $h:\R^{n_x}\times\R^{n_u}\rightarrow\R$, $h_N:\R^{n_x}\rightarrow\R$ to be the nominal constraints\footnote{Note that Problem \ref{problem_snmpc} and our proposed method can naturally be extended to include multiple constraints $h_{\pi,j}$.}, and the twice differentiable functions $\ell:\R^{n_x}\times\R^{n_u}\rightarrow\R_+$, $\ell_N:\R^{n_x}\rightarrow\R_+$ to represent the nominal costs. Denoting by $h_\pi(x_{k|t}, \theta_{k|t})\coloneqq h(x_{k|t}, \pi(x_{k|t}, \theta_{k|t}))$ and $\ell_\pi(x_{k|t}, \theta_{k|t}) \coloneqq \ell(x_{k|t}, \pi(x_{k|t}, \theta_{k|t}))$ the respective functions evaluated under $\pi$, we can finally state our problem:  
\begin{problem}[Stochastic Nonlinear MPC (SNMPC)]
\label{problem_snmpc}
Given a prediction horizon $N\in\mathbb{N}_+$, satisfaction probabilities $1-\epsilon_k \in (0,1)$, $k=1,\ldots, N$, and the cost $J_{P|t}(\theta_{\cdot|t}):=\mathbb{E}_{P_{\eta}^N}\big[\sum_{k=1}^{N-1} \ell_\pi(x_{k|t}, \theta_{k|t}) + \ell_N(x_{N|t})\big]$, 
find $\theta^*_{\cdot|t} = (\theta_{0|t}^*, \ldots, \theta^*_{N-1|t}) \in \Theta^N$, such that:
\begin{argmini!}|s|[2]<b>
{\substack{\theta_{\cdot|t} \in \Theta^{N} 
}}
{ J_{P|t}(\theta_{\cdot|t}) \label{eq_smpc_obj}}
{\label{eq_smpc}}
{\theta^*_{\cdot|t}\in}
\addConstraint{x_{0|t}}{= x_t \label{eq_smpc_constr_x0}}
\addConstraint{x_{k+1|t}}{= f_\pi(x_{k|t}, \theta_{k|t}) + \,\eta_{k|t} \label{eq_smpc_constr_dyn}}
\addConstraint{\mathbb{P}_{x_{k|t}}\big(h_\pi(x_{k|t}, \theta_{k|t}) \le 0\big)}{\ge 1 - \epsilon_k \label{eq_smpc_constr_prob_k}}
\addConstraint{\mathbb{P}_{x_{N|t}}\big(h_N(x_{N|t})\le 0\big) \ge 1 - \epsilon_N .\label{eq_smpc_constr_prob_f}}
\end{argmini!}
\end{problem}
\vspace{2mm}

Unfortunately, despite its importance, solving the SNMPC Problem \ref{problem_snmpc} is generally not possible \cite{book}. This is due to the nonlinearities in $f$ and the non-Gaussian nature of the noise, which do not allow for closed form expressions for the distribution of $x_{k|t}$ for $t>1$ \cite{LANDGRAF2023100905}. Consequently, obtaining closed-form expressions for both $J_{P|t}(\theta_{\cdot|t})$ and chance constraints \eqref{eq_smpc_constr_prob_k}--\eqref{eq_smpc_constr_prob_f} is not possible.
\revisedText{blue}{We address this by proposing an approximation method with formal error quantification that yields a
tractable and scalable reformulation of~\eqref{eq_smpc}.}

%% file: sections/gmm_unc_prop.tex
\label{sec_mix_prop}
We extend the mixture propagation of~\cite{Adams2025FormalNoise} to approximate $P_{x_{k|t}}$ while
accounting for its dependence on the policy parameters, which enables closed-form Wasserstein bounds
on the approximation error. 

\subsection{Formal Gaussian mixture propagation}

Following \cite{Adams2025FormalNoise}, our approach is to approximate $P_{x_{k|t}}$ with a Gaussian mixture approximation, which we call $\hat{P}_{x_{k|t}}$. To define $\hat{P}_{x_{k|t}}$, for each time step $k$ we consider a set of $m$ points in $\R^{n_x}$ denoted by $\mathcal{C}_{k|t} = \{c_{k|t}^{(i)}\}_{i=1}^m$ and a partition  of $\R^{n_x}$ into $m$ regions called $\mathcal{R}_{k|t} = \{\mathcal{R}_{k|t}^{(i)}\}_{i=1}^m$. How to select these quantities will be discussed in Section \ref{sec_disc_locs}. 
Then, $\hat{P}_{x_{k|t}}$ is defined by the following recursion:
\begin{align}
    \label{eq_mixprop_1}
    \hat{P}_{x_{1|t}} &= T(x_t, \theta_{0|t})\\
    \hat{P}_{x_{k+1|t}} &= \mathbb{E}_{\Delta_{\mathcal{C}_{k|t}, \mathcal{R}_{k|t}}}\left[
    T\left(x_{k|t}, \theta_{k|t}\right)\right] \nonumber
    \\  \label{eq_mixprop_k}
    &=\sum_{i=1}^m \hat{\mathbb{P}}_{x_{k|t}}(\mathcal{R}_{k|t}^{(i)}) \, 
    T\left(c_{k|t}^{(i)}, \theta_{k|t}\right)
    , \quad k\ge 1,
\end{align} 
where 
\(
\Delta_{\mathcal{C}_{k|t}, \mathcal{R}_{k|t}} = \sum_{i=1}^{m} \hat{\mathbb{P}}_{x_{k|t}}(R_{k|t}^{(i)}) \, \delta_{c_{k|t}^{(i)}}
\)
is a discrete distribution approximating $\hat{P}_{x_{k|t}}$. Note that $\Delta_{\mathcal{C}_{k|t}, \mathcal{R}_{k|t}}$ has support $\mathcal{C}_{k|t}$ and probabilities given by the probability mass of $\hat{P}_{x_{k|t}}$ over each of the sets in $\mathcal{R}_{k|t}$. Intuitively, $\hat{P}_{x_{k+1|t}}$ is obtained by recursively discretizing the Gaussian mixture at the previous time steps into discrete distributions of size $m$. Then, because $T$ is a Gaussian mixture distribution,  this guarantees that for every $k$, $\hat{P}_{x_{k|t}}$, as defined in \eqref{eq_mixprop_k},
is also a Gaussian mixture distribution of size\footnote{For simplicity, we also augment $\hat{P}_{x_{1|t}}$ with $M-m_\eta$ extra components by repeating components and distributing the corresponding weights} $M=m\cdot m_{\eta}$, where $m_{\eta}$ is the number of components of $P_{\eta}$.
Furthermore, Proposition 2 in \cite{Adams2025FormalNoise} guarantees that
 for all $k=1, \ldots, N$:
\begin{equation}
    \label{eq_wrho_bound}
    W_2(P_{x_{k|t}}, \hat{P}_{x_{k|t}}) \le \beta_{k|t}\,\,, 
\end{equation}
for some known and tractable $\beta_{k|t} > 0$.
\begin{remark}
\label{remark:ErrorDisccretization}
    $\beta_{k|t}$ quantifies the approximation error after $k$ time steps\footnote{For the analytic derivation of $\beta_{k|t}$ we refer to Proposition 2 in \cite{Adams2025FormalNoise}}. Consequently, it depends on the accuracy of the discretization of the various $\hat{P}_{x_{l|t}}$ for $l<k$, which depend on the choice of $\mathcal{C}_{l|t}$ and $\mathcal{R}_{l|t}$. In particular, for a given $\mathcal{C}_{l|t}$, it is well known that the optimal $\mathcal{R}_{l|t}$ is given by the Voronoi partition of $\R^{n_x}$ w.r.t. $\mathcal{C}_{l|t}$ \cite{Graf2000FoundationsDistributions}. 
\end{remark}

\subsection{Efficient Gaussian mixture discretization} \label{sec_disc_locs}\revisedText{blue}{The Voronoi partition, which leads to optimal construction as stated in Remark \ref{remark:ErrorDisccretization}, when induced by a general $\mathcal{C}_{k|t}$, cannot always be computed
efficiently. Therefore, we follow the approach of~\cite{Adamsdiscretizedistributions2025}, summarized in Figure~\ref{fig_modes_fig}.}
 This approach results in a suboptimal discretization, but also leads to tractable closed forms for the Voronoi partitions and, consequently, for the probability weights in \eqref{eq_mixprop_k} and the Wasserstein bounds in \eqref{eq_wrho_bound} \cite{Adamsdiscretizedistributions2025}.
\begin{figure}[tb]
    \centering
\includegraphics[width=0.45\textwidth]{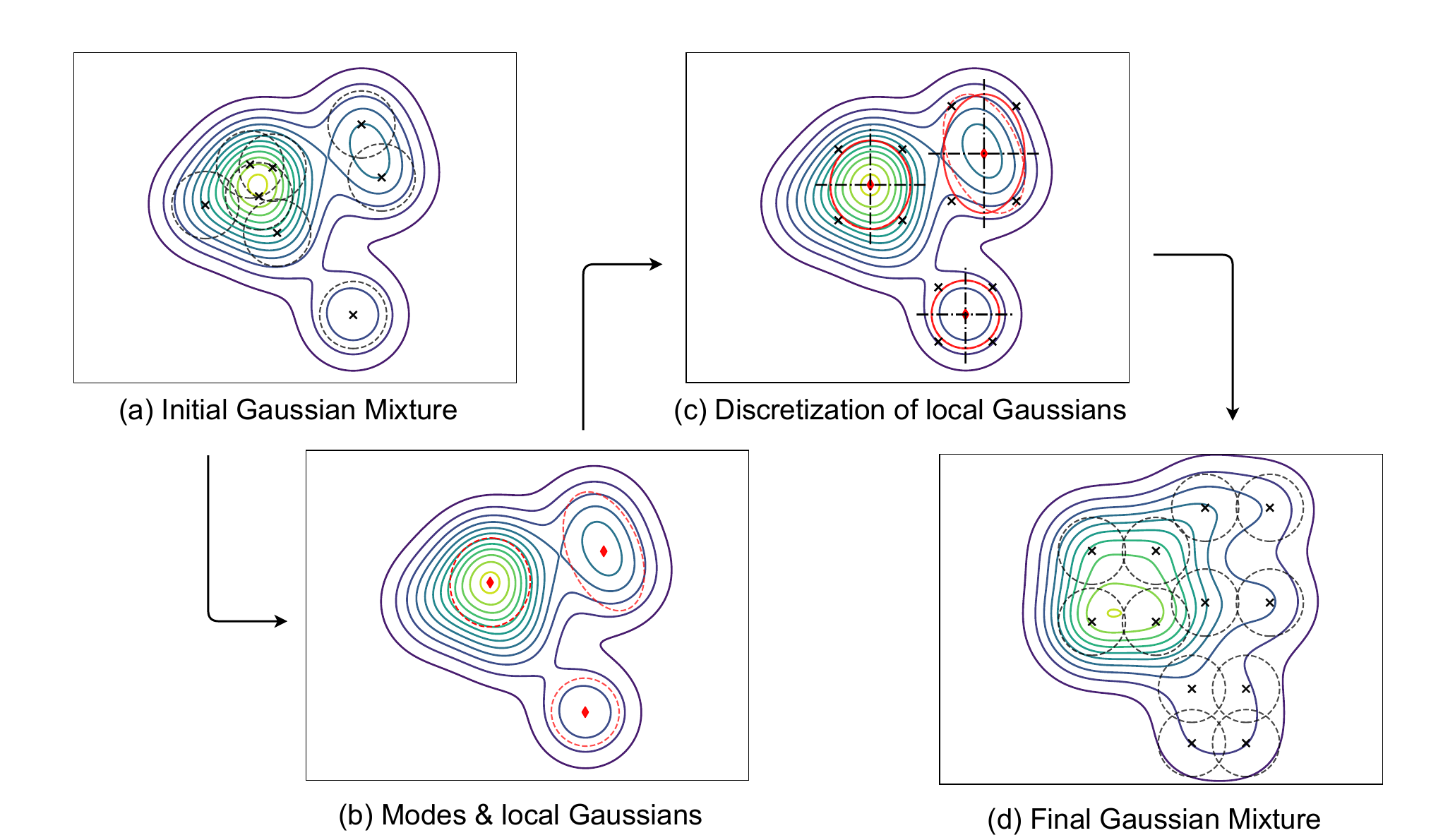}
    \caption{{\hspace{-1.5mm}Mixture propagation for $x^{+}=x+\eta$, $\eta \sim \mathcal{N}(0,0.05)$.
(a) Initial $8$-component mixture with identical covariance (means: black points, circles:
covariance), (b) modes and fitted local Gaussians, (c) $4$-point grids per mode from the orthonormal
projection of the local covariances, weighted by the mixture mass on the induced Voronoi regions,
(d) resulting mixture for $x^{+}$.}
}
    \label{fig_modes_fig}
    \vspace{-6mm}
\end{figure}
The key idea is to identify the \emph{modes} (i.e., local maxima) of $\hat{P}_{x_{k|t}}$, which form the set 
    \(
    \mathcal{S}_{k|t} = \{ s\in\R^{n_x}: \nabla \hat{p}_{x_{k|t}}(s) = 0 \,\wedge\,\nabla^2\hat{p}_{x_{k|t}}(s) \prec 0\}\).
For this, we choose the hybrid mean-shift / Newton algorithm proposed in \cite{Carreira-PerpinMode-FindingDistributions}.\looseness=-1

Next, a \emph{local Gaussian} is fitted around each mode. Using a Laplace approximation, each local Gaussian has a mean equal to the corresponding mode $s_{k|t}^{(i)} \in \mathcal{S}_{k|t}$, and a local covariance $\Sigma_{k|t}^{(i)}$ given by:
\begin{equation}
    \label{eq_local_mode_cov}
    \Sigma_{k|t}^{(i)} = -\left[\left.\nabla^2\log \hat{p}_{x_{k|t}}(x)\right|_{x=s_{k|t}^{(i)}}\right]^{-1}.
\end{equation}

Each local Gaussian is then discretized separately with its assigned budget of $L = m/|\mathcal{S}_{k|t}|$\footnote{We assume for simplicity that $|\mathcal{S}_{k|t}|$ divides $m$ for all $k$, as we can always duplicate or remove modes from the set $\mathcal{S}_{k|t}$ to make this hold}. 
In particular, for every $k$ let $i = 1, \ldots, |\mathcal{S}_{k|t}|$ and $j = 1, \ldots, L$ define an enumeration for the locations $c_{k|t}^{(i, j)} \in \mathcal{C}_{k|t}$. Then, after the discretization of each component takes place and after taking into account the diagonal structure of the covariance matrix $\Sigma_\eta$, the final selection for the locations will be:
\begin{equation}
    \label{eq_gmm_locs_disc}
    c_{k|t}^{(i, j)} = s_{k|t}^{(i)} + \mathrm{diag}\big[(\Sigma_{k|t}^{(i)})^{1/2}\big]\,\odot\, \bar{c}_{k|t}^{(i, j)},
\end{equation}
where $\odot$ is the Hadamard product, $\operatorname{diag}[\cdot]$ is the matrix diagonal, $\bar{c}_{k|t}^{(i, j)}\in \bar{\mathcal{C}}_{k|t}^{(i)}\subseteq \R^{n_x}$ with $|\bar{\mathcal{C}}_{k|t}^{(i)}|=L$ is a set of points that form a rectangular grid centered at zero for each mode $i$, chosen as in \cite{Adams2024FiniteSelection}. This grid choice is such that it enables a tractable closed form expression for the calculation of the Voronoi region probabilities.\looseness=-1 

In the controlled setting, the locations $c^{(i)}_{k|t}$ depend recursively on $\theta_{0:k-1|t}$, and are thus functions of the policy parameters. We
extend~\cite{Adamsdiscretizedistributions2025} to ensure the local smoothness of those functions, by selecting $\bar{\mathcal{C}}_{k|t}^{(i)}$, $\Sigma_{k|t}^{(i)}$ and $|\mathcal{S}_{k|t}|$ (the number of modes considered) constant w.r.t. $\theta_{0:k-1|t}$. Then, the dependence of the locations on $\theta_{0:k-1|t}$ is only via \eqref{eq_gmm_locs_disc} through the modes $s_{k|t}^{(i)}$, and the Implicit Function Theorem (IFT) for $\nabla\hat{p}_{x_{k|t}}(s_{k|t}^{(i)})=0$ implies that the maps $\theta_{0:k-1|t}\rightarrow s_{k|t}^{(i)}$ exist and are locally smooth. This fact enables us to perform gradient-based optimization in Section \ref{sec_imlp_all} while still warm-starting the parameters kept constant from the original propagation scheme of \cite{Adamsdiscretizedistributions2025} to avoid overly conservative bounds $\beta_{k|t}$.\looseness=-1

%% file: sections/mixmpc/cost_and_cdf_derv.tex
\subsection{Dynamics in the Gaussian mixture parameter space}
A Gaussian mixture distribution is fully characterized by its weights, the means and variances of its components. Consequently, to fully characterize $\hat{P}_{x_{k|t}}= \mathrm{GM}(w_{k|t},\, \mu_{k|t},\, \Sigma_\eta) \in \mathcal{G}_M^{\Sigma_\eta}(\R^{n_x})$, we just need to show how the weights and the means of its components evolve over time.
As  $\mu_{k|t} \in (\mathbb{R}^{n_x})^M$
and $w_{k|t} \in \mathcal{W}^M$ are obtained by the convolution of $P_{\eta}$ with a discrete distribution of size $m$ \eqref{eq_mixprop_k}, we use a two-index enumeration\footnote{
We alternate the notation between the double indexing $(i,j)$ for the dynamics and the flattened index $l=1,\ldots, M$ for the cost and constraints.\looseness=-1
} \(w_{k|t} = (w_{k|t}^{(1, 1)}, \ldots, w_{k|t}^{(m, m_\eta)}) \) and \(\mu_{k|t} = (\mu_{k|t}^{(1, 1)}, \ldots, \mu_{k|t}^{(m, m_\eta)})\). Consequently, we can rewrite \eqref{eq_mixprop_1} and \eqref{eq_mixprop_k} as:
\begin{align}
\mu_{1|t}^{(i, j)} &= f_\pi(x_t, \theta_{0|t}) + \mu_\eta^{(j)}, \label{eq_mean1_prop}\\
w_{1|t}^{(i, j)} &= \frac{1}{m} w_\eta^{(j)}, \label{eq_w1_prop}\\
\mu_{k|t}^{(i, j)} &= f_\pi(c_{k|t}^{(i)}, \theta_{k|t}) + \mu_\eta^{(j)}, \label{eq_mk_prop}\\
w_{k|t}^{(i,j)} &= w_\eta^{(j)} \, \hat{\mathbb{P}}_{x_{k|t}}(\mathcal{R}_{k|t}^{(i)})\label{eq_w_k_prop},
\end{align}
with $i=1, \ldots, m$ and $j = 1, \ldots, m_\eta$, where $c_{k|t}^{(i)}$ and $\mathcal{R}_{k|t}^{(i)}$ are selected via the procedure described in Section \ref{sec_mix_prop}-B. 

Finally, we introduce the parameter vector $z_{k|t} := (\mu_{k|t}, w_{k|t}) \in \mathcal{Z}_M := (\mathbb{R}^{n_x})^M \times \mathcal{W}^M$ and the map $\psi:\mathcal{Z}_M \times \Theta \rightarrow \mathcal{Z}_M$, which implements the Gaussian mixture propagation in the induced mixture parameter space, $z_{k+1|t} = \psi(z_{k|t}, \theta_{k|t})$, according to \eqref{eq_mk_prop} and \eqref{eq_w_k_prop}.

\subsection{Expected cost under the Gaussian mixtures} \label{sec_cost_reform}
We propose a tractable approximation for the expected cost $J_{P|t}(\theta_{\cdot|t})$ in \eqref{eq_smpc_obj} using instead the expected cost taken under the derived Gaussian mixture approximations:
\begin{equation}
\begin{aligned}
\label{eq_approx_cost}
J_{\hat{P}|t}(\theta_{\cdot|t}) \coloneq  \sum_{k=1}^{N} \mathbb{E}_{\hat{P}_{x_{k|t}}}[\ell_k(x_{k|t})],
\end{aligned} 
\end{equation}
with the simplified notation $\ell_k(x_{k|t}) \coloneq \ell_\pi(x_{k|t}, \theta_{k|t})$ for $k=1, \ldots, N-1$. By the linearity of the expectation operator we can work separately with each Gaussian component as:
\begin{equation}
\begin{aligned}
    \label{eq_intermediate_obj_2}
    J_{\hat{P}|t}(\theta_{\cdot\mid t}) = \sum_{k=1}^{N}\sum_{l=1}^M w_{k|t}^{(l)} \, \mathbb{E}_{\mathcal{N}(\mu_{k|t}^{(l)}, \Sigma_\eta)}[\ell_k(x_{k|t})]
\end{aligned}    
\end{equation}
Following common approaches for Gaussian expectations (e.g. \cite{Scampicchio2025GaussianControl}), we use a first-order Taylor approximation w.r.t. $x_{k|t}$ for $\ell_k(x_{k|t})$ around each component's mean:
\begin{equation}
\label{eq_taylor_obj}
\begin{aligned}
\ell_k(x_{k|t})&\approx \ell_k(\mu_{\revisedText{blue}{k}|t}^{(l)}) + \nabla \ell_k(\mu_{\revisedText{blue}{k}|t}^{(l)})^T(x_{k|t} - \mu_{k|t}^{(l)})
\end{aligned}
\end{equation}
The second term results to zero expectation under ${\mathcal{N}(\mu_{k|t}^{(l)}, \Sigma_\eta)}$. Thus, the expectations in \eqref{eq_intermediate_obj_2} are simplified to:
\begin{equation}
\label{eq_cost_deriv_per_comp}
\begin{aligned}
    \mathbb{E}_{\mathcal{N}(\mu_{k|t}^{(l)}, \Sigma_\eta)}\left[\ell_k(x_{k|t})\right] &\approx \ell_k(\mu_{k|t}^{(l)}).
\end{aligned}
\end{equation}
After substitution of \eqref{eq_cost_deriv_per_comp} in \eqref{eq_intermediate_obj_2} we finally get the approximation $\hat{J}_{\hat{P}|t}(\theta_{\cdot|t}) \approx J_{\hat{P}|t}(\theta_{\cdot|t})$, given by: 
\begin{equation}
\begin{aligned}
\hat{J}_{\hat{P}|t}(\theta_{\cdot|t}) &\coloneq\sum_{k=1}^{N} \sum_{l=1}^M w_{k|t}^{(l)} \, \ell_k(\mu_{k|t}^{(l)}) \\&= \sum_{l=1}^M\sum_{k=1}^{N-1}w_{k|t}^{(l)} \, \ell_\pi(\mu_{k|t}^{(l)}, \theta_{k|t}) + w_{N|t}^{(l)}\,\ell_N(\mu_{N|t}^{(l)}) 
\end{aligned}
   \label{eq_hatJ_main}
\end{equation}
Regarding the quality of the approximation, we provide the following property for the case of quadratic cost functions:
\begin{proposition}[Quadratic cost approximation correctness] \label{thm_cost_dec}
\label{prop_quad_cost_bound}
Assume the stage and terminal costs \revisedText{blue}{under the policy $\pi$} are quadratic in $x$ with
\(
\ell_k(x_{k|t}) =  \|x_{k|t} - r_{k|t}\|_{Q_{k}}^2 + b_{k|t}
\) 
for $b_{k|t} \in \R$, $r_{k|t} \in \R^{n_x}$ and $Q_k \in \mathbb{S}_{++}^{n_x}$, $k=1, \ldots, N$. Then:
\begin{equation}
    J_{\hat{P}|t}(\theta_{\cdot|t}) = \hat{J}_{\hat{P}|t}(\theta_{\cdot|t}) +\revisedText{blue}{\sum_{k=1}^{N}\mathrm{Tr}(Q_k \Sigma_\eta)}. \label{thm1_eq1}
\end{equation}
Also, the deviation from $J_{P|t}(\theta_{\cdot|t})$ will be bounded by:
\begin{equation}
\begin{aligned}
\big|J_{\hat{P}|t}(\theta_{\cdot|t}) - J_{P|t}(\theta_{\cdot|t})\big| \le
\,\sum_{k=1}^{N}\,\|Q_k\|\, \bar{r}_{k|t} \, \beta_{k|t},
\end{aligned}
\label{eq_cost_cost_bound}
\end{equation}
where \(\bar{r}_{k|t} = \sqrt{2\big(\mathbb{E}_{\hat{P}_{x_{k|t}}}\![\lVert x - r_{k|t} \rVert^2] + \mathbb{E}_{P_{x_{k|t}}}\![\lVert x - r_{k|t} \rVert^2] \big)}\).
\end{proposition}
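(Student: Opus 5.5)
The proof has two parts. The identity \eqref{thm1_eq1} follows from the exact second-moment formula for Gaussians. The bound \eqref{eq_cost_cost_bound} follows from a coupling argument with the optimal $W_2$ plan and Cauchy--Schwarz.

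\textbf{Exact identity.} Start from the decomposition \eqref{eq_intermediate_obj_2}, which is exact by linearity of expectation. For each component, write $x = \mu_{k|t}^{(l)} + e$ with $e \sim \mathcal{N}(0,\Sigma_\eta)$. Expanding the quadratic gives
\[
\mathbb{E}\big[\|x-r_{k|t}\|_{Q_k}^2\big] = \|\mu_{k|t}^{(l)}-r_{k|t}\|_{Q_k}^2 + 2(\mu_{k|t}^{(l)}-r_{k|t})^T Q_k\,\mathbb{E}[e] + \mathbb{E}[e^T Q_k e].
\]
The cross term vanishes because $\mathbb{E}[e]=0$, and $\mathbb{E}[e^TQ_ke]=\mathrm{Tr}(Q_k\Sigma_\eta)$. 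Hence
\[
\mathbb{E}_{\mathcal{N}(\mu_{k|t}^{(l)},\Sigma_\eta)}[\ell_k] = \ell_k(\mu_{k|t}^{(l)}) + \mathrm{Tr}(Q_k\Sigma_\eta).
\]
Every component of $\hat P_{x_{k|t}}$ has the common covariance $\Sigma_\eta$; this is where Assumption~\ref{asu_noise_wasser} and the structure of \eqref{eq_mixprop_k} enter. Summing with weights $w_{k|t}^{(l)}$, which add to one, and then over $k$ gives \eqref{thm1_eq1}. One bookkeeping point: the original cost $J_{P|t}$ indexes stage costs from $k=1$, matching \eqref{eq_approx_cost}. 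So I identify $\ell_N$ with the $k=N$ term and take $b_{k|t}$ to absorb any $\theta$-dependent constants.

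\textbf{Wasserstein bound.} Fix $k$ and let $\gamma$ be an optimal coupling of $P_{x_{k|t}}$ and $\hat P_{x_{k|t}}$. Optimal couplings exist since both measures have finite second moments. The constants $b_{k|t}$ cancel in the difference. Write $a=x-r_{k|t}$ and $b=y-r_{k|t}$. Then
\[
\big|\mathbb{E}_{P}[\ell_k]-\mathbb{E}_{\hat P}[\ell_k]\big| = \big|\mathbb{E}_\gamma[a^TQ_ka - b^TQ_kb]\big| = \big|\mathbb{E}_\gamma[(a-b)^TQ_k(a+b)]\big| \le \|Q_k\|\,\mathbb{E}_\gamma\big[\|x-y\|\,\|a+b\|\big].
\]
This uses the symmetry of $Q_k$ and the operator norm. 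Cauchy--Schwarz bounds the last expectation by $\sqrt{\mathbb{E}_\gamma\|x-y\|^2}\sqrt{\mathbb{E}_\gamma\|a+b\|^2}$. The first factor equals $W_2(P_{x_{k|t}},\hat P_{x_{k|t}}) \le \beta_{k|t}$ by \eqref{eq_wrho_bound}. For the second factor, $\|a+b\|^2\le 2\|a\|^2+2\|b\|^2$. The marginals of $\gamma$ turn this into $2(\mathbb{E}_{P}\|x-r_{k|t}\|^2+\mathbb{E}_{\hat P}\|x-r_{k|t}\|^2)$, whose square root is exactly $\bar r_{k|t}$. The triangle inequality over $k$ then yields \eqref{eq_cost_cost_bound}.

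\textbf{Main obstacle.} The computations are routine. The delicate step is justifying that $J_{P|t}$ decomposes as a sum over $k$ of expectations under the marginals $P_{x_{k|t}}$, even though the stage costs depend on $\theta_{k|t}$ through $\pi$. This works because, under the hypothesis, each $\ell_k$ is a fixed quadratic in $x$ for given parameters, so only the marginal laws matter. With that in place, no uniform coupling across time is needed: a separate optimal coupling for each $k$ suffices.
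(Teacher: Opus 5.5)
Your proof is correct and follows the paper's argument essentially step for step. For \eqref{thm1_eq1} you use the Gaussian identity $\mathbb{E}[e^TQ_ke]=\mathrm{Tr}(Q_k\Sigma_\eta)$ and sum over components and stages; for the bound you use the triangle inequality over $k$, the factorization $(a-b)^TQ_k(a+b)$, Cauchy--Schwarz with $\|a+b\|^2\le 2(\|a\|^2+\|b\|^2)$, and the optimal $W_2$ coupling, exactly as the paper does (your explicit centering at $r_{k|t}$, left implicit in the paper's $x^TQx-y^TQy$, is only a clarification).
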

\vspace{0.5mm}
\begin{proof}
The expectation of the quadratic costs over a Gaussian distribution has a closed form given by
\begin{equation}
\mathbb{E}_{\mathcal{N}(\mu_{k|t}^{(l)},\Sigma_\eta)}[\ell_k(x_{k|t})] = \ell_k(\mu_{k|t}^{(l)}) + \mathrm{Tr}(\revisedText{blue}{Q_k} \Sigma_\eta).
\end{equation}
Summing over all components of each Gaussian mixture in the horizon, and then over the horizon $N$ results in \eqref{thm1_eq1}.
Now, applying the triangle inequality to the LHS of \eqref{eq_cost_cost_bound} gives: 
\begin{align}
&|J_{\hat{P}|t}(\theta_{\cdot|t}) - J_{P|t}(\theta_{\cdot|t})| \le 
\label{eq_JJ_proof_step_trian} \\ &\sum_{k=1}^{N} \big|\,\mathbb{E}_{\hat{P}_{x_{k|t}}}[\|x_{k|t} - r_{k|t}\|_{Q_k}^2] - \mathbb{E}_{P_{x_{k|t}}}[\|x_{k|t} - r_{k|t}\|_{Q_k}^2]\,\big| \nonumber
\end{align}
For any $x\in\R^{n_x}, y\in\R^{n_x}$ and any joint distribution $\gamma\in\mathcal{P}_2(\R^{n_x} \times \R^{n_x})$ we have
\begin{flalign}
&\mathbb{E}_\gamma[|x^TQx - y^TQy|] = \mathbb{E}_\gamma[|(x - y)^TQ(x + y)|] \le \nonumber\\ 
&\|Q\|\, (2\mathbb{E}_\gamma[\|x\|^2+\|y\|^2])^{1/2}\, (\mathbb{E}_\gamma[\|x-y\|^2])^{1/2} \label{eq_inter_proof_new_1}
\end{flalign}
due to the Cauchy-Schwarz and $\|x+y\|^2 \le 2(\|x\|^2+\|y\|^2)$ inequalities. Applying the triangle inequality and \eqref{eq_inter_proof_new_1} to each term on the RHS of \eqref{eq_JJ_proof_step_trian}, and selecting $\gamma$ as the $2$-Wasserstein optimal coupling of $P_{x_{k|t}}, \hat{P}_{x_{k|t}}$, we get:
\begin{equation}
|J_{\hat{P}|t}(\theta_{\cdot|t}) - J_{P|t}(\theta_{\cdot|t})| \le \sum_{k=1}^N \|Q_k\|\, \bar{r}_{k|t} \,W_2(\hat{P}_{x_{k|t}}, P_{x_{k|t}}) \nonumber
\end{equation}
The result \eqref{eq_cost_cost_bound} finally follows from the bounds \eqref{eq_wrho_bound}.
\end{proof}
\subsection{Chance constraints reformulation using the Gaussian mixture distributions} \label{sec_cc_reform}
Let $h_k(x_{k|t}) \coloneq h_\pi(x_{k|t}, \theta_{k|t})$, for $k=1, \ldots, N-1$ to make the notation more compact.  
We replace the original chance constraints $\mathbb{P}_{x_{k|t}}\left(h_k(x_{k|t}) \le 0\right)$ of \eqref{eq_smpc} with $\hat{\mathbb{P}}_{x_{k|t}}\left(h_k(x_{k|t}) \le 0\right)$, which are the constraint satisfaction probabilities calculated under the derived Gaussian mixture distributions. 
As these admit no closed form for nonlinear $h_k$, we proceed as for the cost
and first apply the law of total probability, for all $k=1,\dots,N$:
\begin{align}
    \hat{\mathbb{P}}_{x_{k|t}}\Big(h_k\big(x_{k|t}\big) \le 0\Big) =
    \sum_{l=1}^M w_{k|t}^{(l)} \, \mathcal{N}\Big(h_k\big(x_{k|t}\big) \le 0; \mu_{k|t}^{(l)}, \Sigma_\eta&\Big) 
    \nonumber
\end{align}
Then, a first-order Taylor approximation of $h_k(\cdot)$ around each component's mean yields:
\begin{align}
\label{eq_taylor_cc}
h_k(x_{k|t}) &\approx h_k(\mu_{k|t}^{(l)})\, +\nabla h_k(\mu_{k|t}^{(l)})^T (x_{k|t} - \mu_{k|t}^{(l)})
\end{align}
Notice that each affine approximation is normally distributed with mean $h_k(\mu_{k|t}^{(l)})$ and variance $\|\nabla h_k(\mu_{k|t}^{(l)})\|^2_{\Sigma_\eta}$. Therefore, using the standard normal CDF $\Phi(\cdot)$, we end up with the following closed form approximation:
\begin{equation}
\hat{\mathbb{P}}_{x_{k|t}}\big(h_k\big(x_{k|t}\big) \le 0\big) \approx \sum_{l=1}^M w_{k|t}^{(l)}\Phi\Big( -\frac{h_k\big(\mu_{k|t}^{(l)}\big)}{ \big\| \nabla h_k(\mu_{k|t}^{(l)}) \big\|_{\Sigma_\eta}\,}\Big)\label{eq_CDF_approx_f}
\end{equation}
\revisedText{blue}{Note that without curvature information, \eqref{eq_CDF_approx_f} is a heuristic approximation, whereas for affine $h_k$ it becomes an exact reformulation\footnote{The RHS of~\eqref{eq_CDF_approx_f} is well defined even when the denominator vanishes, as long as $h_k(\mu_{k|t}^{(l)}) \neq 0$, since $\Phi(+\infty) =1$ and $\Phi(-\infty) =0$; 
in practice, we add a small positive constant to the denominator.}.}
To characterize the error introduced from the approximation \eqref{eq_CDF_approx_f} we provide the following result for the simple but very informative case of affine functions $h_k$:
\begin{proposition}[Affine constraints guarantees] \label{thm2}
    Suppose that the constraint functions \revisedText{blue}{under the policy $\pi$} are affine in $x$, with $\nabla h_k(x_{k|t}) = a_{k|t} \in \R^{n_x}$, for non-zero $a_{k|t} \in \R^{n_x}$ and all $k=1,\ldots, N$. Then the approximate equality \eqref{eq_CDF_approx_f} becomes an exact equality. Furthermore, the squared difference of the true and approximate probabilities satisfies:
    \begin{align} 
        \big|\mathbb{P}_{x_{k|t}}\left( h_k(x_{k|t})\le 0\right) - \hat{\mathbb{P}}_{x_{k|t}}\left(h_k(x_{k|t})\le 0\right)\big|^2 \nonumber \\\le  \sqrt{2/\pi}\, \|a_{k|t}\|_{\Sigma_\eta}^{-1}\, \|a_{k|t}\|\, \beta_{k|t}  .\label{eq_prob_bound}
    \end{align}
\end{proposition}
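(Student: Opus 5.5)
First, exactness. For affine $h_k(x)=a_{k|t}^T x + b$, the Taylor expansion \eqref{eq_taylor_cc} is the function itself. Under $\mathcal{N}(\mu_{k|t}^{(l)},\Sigma_\eta)$, the scalar $h_k(x_{k|t})$ is then exactly Gaussian with mean $h_k(\mu_{k|t}^{(l)})$ and variance $\|a_{k|t}\|_{\Sigma_\eta}^2>0$. Hence each component probability equals $\Phi\big(-h_k(\mu_{k|t}^{(l)})/\|a_{k|t}\|_{\Sigma_\eta}\big)$, and summing with the weights $w_{k|t}^{(l)}$ via the law of total probability gives equality in \eqref{eq_CDF_approx_f}.

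For the bound, I would avoid working with indicator functions directly, since they are not Lipschitz and a $W_2$ bound does not control them. Instead I would smooth with the Gaussian noise. The idea is to write $x_{k|t}=y+\eta$, where $\eta\sim\mathcal{N}(0,\Sigma_\eta)$ is the Gaussian part of the last noise step. Both $P_{x_{k|t}}$ and $\hat P_{x_{k|t}}$ are of the form $\nu*\mathcal{N}(0,\Sigma_\eta)$: $\hat P$ has $\nu$ equal to a discrete measure, and $P$ has $\nu$ equal to the push-forward of $P_{x_{k-1|t}}$ convolved with the discrete part of $P_\eta$, by Assumption~\ref{asu_noise_wasser}. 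Then
\[
\mathbb{P}(h_k\le 0)=\mathbb{E}_{\nu}\big[g(y)\big],\qquad g(y):=\Phi\big(-h_k(y)/\|a_{k|t}\|_{\Sigma_\eta}\big).
\]
The function $g$ is Lipschitz with constant $\sup\Phi'\cdot\|a_{k|t}\|/\|a_{k|t}\|_{\Sigma_\eta}=\|a_{k|t}\|/(\sqrt{2\pi}\,\|a_{k|t}\|_{\Sigma_\eta})$. By Kantorovich--Rubinstein duality, the probability gap is then at most this constant times $W_1(\nu,\hat\nu)\le W_2(\nu,\hat\nu)$.

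The main obstacle is that \eqref{eq_wrho_bound} bounds $W_2$ between the convolved measures $P_{x_{k|t}}$ and $\hat P_{x_{k|t}}$, not between the pre-convolution measures $\nu$ and $\hat\nu$. Deconvolution does not preserve Wasserstein bounds in general, although $\beta_{k|t}$ in \cite{Adams2025FormalNoise} is actually built from the pre-noise error. Since the claimed inequality bounds the \emph{squared} gap by a first power of $\beta_{k|t}$, the intended route is probably different. My fallback is to work with the convolved measures directly and write $P(h\le0)=\mathbb{E}_P[\mathbf 1\{a^Tx\le -b\}]$, reducing the problem to the one-dimensional push-forwards under $x\mapsto a_{k|t}^Tx$. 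These 1D laws have $W_2$ distance at most $\|a_{k|t}\|\beta_{k|t}$, and the approximate law has density bounded by $1/(\sqrt{2\pi}\|a_{k|t}\|_{\Sigma_\eta})$, since it is a mixture of Gaussians with that standard deviation. The CDF gap at a point can then be controlled by a standard Lévy-type argument: move mass by at most $\delta$ except on a set of probability at most $W_1/\delta$, which costs at most (density bound)$\cdot\delta$. Optimizing over $\delta$ gives
\[
|\Delta|\le 2\sqrt{\|a_{k|t}\|\beta_{k|t}/(\sqrt{2\pi}\|a_{k|t}\|_{\Sigma_\eta})}.
\]
Squaring yields $\frac{4}{\sqrt{2\pi}}\|a_{k|t}\|_{\Sigma_\eta}^{-1}\|a_{k|t}\|\beta_{k|t}$, which is of exactly the stated form. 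I would then tighten the constant to $\sqrt{2/\pi}$, either by a sharper one-sided choice of $\delta$ or by using the density bound symmetrically. I expect getting the exact constant to be the remaining fiddly step.
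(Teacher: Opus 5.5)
Your exactness argument and your fallback route are the same as the paper's proof. You push both laws forward under the affine map $h_k$ and note that ${h_k}_\#\hat P_{x_{k|t}}$ is a one-dimensional Gaussian mixture with common standard deviation $\|a_{k|t}\|_{\Sigma_\eta}$. Its density is therefore at most $C:=(\sqrt{2\pi}\,\|a_{k|t}\|_{\Sigma_\eta})^{-1}$. You then compare CDFs through the one-dimensional distance $d:=W_1({h_k}_\#P_{x_{k|t}},{h_k}_\#\hat P_{x_{k|t}})$ and close with $d\le\|a_{k|t}\|\,W_2(P_{x_{k|t}},\hat P_{x_{k|t}})\le\|a_{k|t}\|\,\beta_{k|t}$.

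The gap is the comparison step. Your hard-cutoff (L\'evy-type) argument gives $|\Delta|\le C\delta+d/\delta$. Since $\min_{\delta>0}(C\delta+d/\delta)=2\sqrt{Cd}$ exactly, the best it can deliver is $|\Delta|^2\le 4Cd$. That is the constant $2\sqrt{2/\pi}$, not the claimed $\sqrt{2/\pi}$. No other choice of $\delta$, one-sided or symmetric, repairs this, because the density band of width $\delta$ is always charged its full rectangle cost $C\delta$. As written, you have proved a strictly weaker inequality than the proposition.

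The missing idea is to replace the indicator by a Lipschitz ramp instead of a hard threshold. Let $Y\sim{h_k}_\#P_{x_{k|t}}$ and $\hat Y\sim{h_k}_\#\hat P_{x_{k|t}}$. Let $g_\varepsilon$ equal $1$ on $(-\infty,0]$, $0$ on $[\varepsilon,\infty)$, and be linear in between.

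\begin{itemize}
\item Since $g_\varepsilon$ is $1/\varepsilon$-Lipschitz, Kantorovich--Rubinstein gives $|\mathbb{E}g_\varepsilon(Y)-\mathbb{E}g_\varepsilon(\hat Y)|\le d/\varepsilon$.
\item The smoothing cost is now a triangle rather than a rectangle: $0\le\mathbb{E}[g_\varepsilon(\hat Y)-\mathbf{1}\{\hat Y\le 0\}]\le C\int_0^\varepsilon(1-u/\varepsilon)\,du=C\varepsilon/2$.
\item Combined with $\mathbf{1}\{Y\le 0\}\le g_\varepsilon(Y)$, this yields $\mathbb{P}(Y\le 0)-\mathbb{P}(\hat Y\le 0)\le d/\varepsilon+C\varepsilon/2$.
\item The reverse inequality follows the same way, using the ramp that decreases from $1$ to $0$ on $[-\varepsilon,0]$. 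Again only the density bound of $\hat Y$ is needed.
\end{itemize}

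Choosing $\varepsilon=\sqrt{2d/C}$ gives $|\Delta|\le\sqrt{2Cd}$. Hence $|\Delta|^2\le 2C\|a_{k|t}\|\,\beta_{k|t}=\sqrt{2/\pi}\,\|a_{k|t}\|_{\Sigma_\eta}^{-1}\|a_{k|t}\|\,\beta_{k|t}$. This is exactly \cite[Prop.~1.2]{ross2011fundamentalssteinsmethod}, which the paper invokes directly to obtain the stated constant.
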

\vspace{1mm}
\begin{proof}
    For affine functions the first-order Taylor approximation is exact and thus the first part of the proposition is straightforward. From \cite[prop. 1.2]{ross2011fundamentalssteinsmethod} we know that
    \begin{align} 
        \big|\mathbb{P}_{x_{k|t}}\left( h_k(x_{k|t})\le 0\right) - \hat{\mathbb{P}}_{x_{k|t}}\left(h_k(x_{k|t})\le 0\right)\big|^2 \nonumber \\\le 2\,\bar{p}_{\max}({h_k}_\#\hat{P}_{x_{k|t}})\, W_1({h_k}_\#\hat{P}_{x_{k|t}}, {h_k}_\#P_{x_{k|t}}),  \label{eq_cc_interm_2}
    \end{align}
    where $\bar{p}_{\max}({h_k}_\#\hat{P}_{x_{k|t}})$ is the maximum value of the pdf of the push-forward distribution ${h_k}_\#\hat{P}_{x_{k|t}}$. Since $h_k$ is affine, ${h_k}_\#\hat{P}_{x_{k|t}}$ will be a Gaussian mixture with shifted means and shared across components variance equal to $\|a_{k|t}\|_{\Sigma_\eta}^2$. Therefore, the maximum value of the underlying pdf will satisfy $\bar{p}_{\max}({h_k}_\#\hat{P}_{x_{k|t}}) \le(\sqrt{2\pi}\,\|a_{k|t}\|_{\Sigma_\eta})^{-1}$. The result follows via the bounds \eqref{eq_wrho_bound} after applying in \eqref{eq_cc_interm_2} successively the well-known inequalities \cite{CedricVillaniOptimalNew}:
    \begin{align}
         W_1({h_k}_\#\hat{P}_{x_{k|t}}, {h_k}_\#P_{x_{k|t}}) &\le \|a_{k|t}\| W_1(\hat{P}_{x_{k|t}}, P_{x_{k|t}})\\ &\le \|a_{k|t}\| W_2(\hat{P}_{x_{k|t}}, P_{x_{k|t}}) ,
    \end{align}
    for affine $h_k$ with $\nabla h_k(x_{k|t}) = a_{k|t}$.
\end{proof}

%% file: sections/mixmpc/formulation.tex
The complete resulting optimization problem that approximates the original SNMPC \eqref{eq_smpc}, using the derived Gaussian mixture distributions over the predicted state, is given by:
\begin{mini!}|s|[2]
{\substack{\theta_{\cdot|t} \in \Theta^N
\\ z_{\cdot|t}\in \mathcal{Z}_M^N}}
{\sum_{l=1}^{M} \sum_{k=1}^{N-1} w_{k|t}^{(l)} \ell_\pi \big(\mu_{k|t}^{(l)}, \theta_{k|t}\big) 
+ w_{N|t}^{(l)}\ell_N\big(\mu_{N|t}^{(l)}\big) \label{eq_mixmpc_objective}}
{\label{eq_mixmpc}}
{}
\addConstraint{\mu_{1|t}^{(i, j)} = f_\pi(x_t, \theta_{0|t}) + \mu_\eta^{(j)} \label{eq_mu1_mixmpc}}
\addConstraint{w_{1|t}^{(i, j)} = \frac{1}{m} w_\eta^{(j)} \label{eq_w1_mixmpc}}
\addConstraint{z_{k+1|t} = \psi(z_{k|t}, \theta_{k|t}) \label{eq_mixprop_mixmpc}}
\addConstraint{\sum_{l=1}^M w_{k|t}^{(l)}}{\,\Phi\Big( \frac{h_\pi\big(\mu_{k|t}^{(l)}, \theta_{k|t}\big)}{\big\| \nabla h_\pi(\mu_{k|t}^{(l)},\theta_{k|t}) \big\|_{\Sigma_\eta}}\Big)\le \bar{\epsilon}_k \label{eq_mixmpc_cdf_k}}
\addConstraint{\sum_{l=1}^M w_{N|t}^{(l)}}{\,\Phi\Big( \frac{h_N\big(\mu_{N|t}^{(l)}\big)}{\big\| \nabla h_N(\mu_{N|t}^{(l)}) \big\|_{\Sigma_\eta}}\Big)\le \bar{\epsilon}_N \label{eq_mixmpc_cdf_N}}
\end{mini!}

Equation \eqref{eq_mixmpc_objective} is the objective function as derived in Section \ref{sec_cost_reform}. The constraints \eqref{eq_mu1_mixmpc}--\eqref{eq_w1_mixmpc} initialize the first mixture analogously to \eqref{eq_mean1_prop}--\eqref{eq_w1_prop}, and~\eqref{eq_mixprop_mixmpc} preserves the
propagation mechanism of Section~\ref{sec_mix_prop} for all $k=1,\dots,N-1$. Finally, the constraints \eqref{eq_mixmpc_cdf_k}, \eqref{eq_mixmpc_cdf_N} \revisedText{blue}{are expressed via the Taylor approximation of Section \ref{sec_cc_reform}, where we used the property $\Phi(-y) = 1 - \Phi(y)$ to translate them to violation probabilities,} and enforced with possibly tightened probabilities $1 - \bar{\epsilon}_k \ge 1 - \epsilon_k$, for $k=1, \ldots,N$.
To provide a formal notion of correctness guarantees for the approximate optimization problem, let us consider again quadratic-affine problems. Then, we can control the quality of the approximation to the original SNMPC via a proper selection of $M$ and $\bar{\epsilon}_k$, based on the following theorem:  
\begin{theorem}[Correctness and optimality guarantees] \label{thm_conv}
Assume that $f_\pi$ is globally Lipschitz continuous, \revisedText{blue}{and that under the policy $\pi$, $h_\pi, h_N$ are affine and $\ell_\pi, \ell_N$ quadratic} in $x$ as in Propositions \ref{thm_cost_dec}--\ref{thm2}.
Then, for any choice of $\bar{\epsilon}_k$, with $\bar{\epsilon}_k < \epsilon_k$, $k=1,\ldots, N$, 
there exists an $\bar{M} \in \mathbb{N}_+$ such that any feasible solution of \eqref{eq_mixmpc} with $M\ge\bar{M}$ is also feasible for \eqref{eq_smpc}. Furthermore, the optimal solution of \eqref{eq_mixmpc}, denoted by $\tilde{\theta}_{\cdot|t}$, satisfies:\looseness=-1 \vspace{-2.5mm}
\begin{align}
    J_{P|t}(\tilde{\theta}_{\cdot|t}) - J_{P|t}(\theta^*_{\cdot|t}) \le 2\sum_{k=1}^{N}\,\|Q_k\|\, \bar{r}_{k|t, \max}\,\beta_{{k|t},\max} \label{opti_bound_eq}
\end{align}
with $\bar{r}_{k|t, \max} = \max_{\tilde{\theta}_{\cdot|t}, \theta^*_{\cdot|t}} \bar{r}_{k|t}$, for the $\bar{r}_{k|t}$ of Proposition \ref{thm_cost_dec}, evaluated when the mixture distributions are derived with each one of $\tilde{\theta}_{\cdot|t}, \theta^*_{\cdot|t}$, and $\beta_{{k|t}, \max}$ is defined analogously.\looseness=-1
\end{theorem}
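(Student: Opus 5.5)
Under the stated assumptions, Proposition \ref{thm2} makes the constraints \eqref{eq_mixmpc_cdf_k}--\eqref{eq_mixmpc_cdf_N} exact evaluations of the violation probabilities under $\hat{P}_{x_{k|t}}$, and Proposition \ref{thm_cost_dec} makes the objective \eqref{eq_mixmpc_objective} equal to $J_{\hat{P}|t}$ up to the $\theta$-independent constant $\sum_k \mathrm{Tr}(Q_k\Sigma_\eta)$. So I would reduce everything to two facts: a uniform smallness of $\beta_{k|t}$ as $M$ grows, and the perturbation bounds \eqref{eq_prob_bound} and \eqref{eq_cost_cost_bound}.

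\emph{Feasibility.} Fix $\bar{\epsilon}_k<\epsilon_k$ and set $\delta_k=\epsilon_k-\bar{\epsilon}_k>0$. For a feasible $\theta_{\cdot|t}$ of \eqref{eq_mixmpc}, the $\hat{P}$-violation probability is at most $\bar{\epsilon}_k$. By \eqref{eq_prob_bound}, the true violation probability is then at most
\[
\bar{\epsilon}_k+\big(\sqrt{2/\pi}\,\|a_{k|t}\|_{\Sigma_\eta}^{-1}\|a_{k|t}\|\,\beta_{k|t}\big)^{1/2}.
\]
The factor $\|a_{k|t}\|\,\|a_{k|t}\|_{\Sigma_\eta}^{-1}$ is scale-invariant and bounded by $\sqrt{\lambda_{\max}(\Sigma_\eta^{-1})}$, so it is uniform in $\theta$. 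It therefore suffices to find $\bar{M}$ such that $\beta_{k|t}\le \delta_k^2\sqrt{\pi/2}\,\lambda_{\max}(\Sigma_\eta^{-1})^{-1/2}$ for all $k$ and all admissible $\theta_{\cdot|t}$ whenever $M\ge\bar{M}$. For this I would unroll the recursion of Proposition 2 in \cite{Adams2025FormalNoise}. Global Lipschitzness of $f_\pi$ with constant $L_f$ gives
\[
\beta_{k+1|t}\le L_f\big(\beta_{k|t}+W_2(\hat{P}_{x_{k|t}},\Delta_{\mathcal{C}_{k|t},\mathcal{R}_{k|t}})\big),
\]
because convolution with $P_\eta$ is $W_2$-nonexpansive. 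Hence $\beta_{k|t}\le\sum_{l<k}L_f^{k-l}\,W_2(\hat{P}_{x_{l|t}},\Delta_l)$, and only the discretization errors need to vanish as $m$, and thus $M=m\,m_\eta$, grows.

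\emph{Main obstacle.} I expect the hardest step to be showing that these discretization errors go to zero as $m\to\infty$, uniformly in $\theta_{\cdot|t}$. Finite-$m$ quantization errors of measures with bounded second moments tend to zero \cite{Graf2000FoundationsDistributions}, with grid refinement around the modes as in \cite{Adamsdiscretizedistributions2025}. To make this uniform, I would first try restricting to a compact $\Theta$. Then Lipschitzness gives uniformly bounded second moments of $\hat{P}_{x_{k|t}}$, and I would invoke the grid-based error bound of \cite{Adamsdiscretizedistributions2025}, which depends on $\theta$ only through those moments. If uniformity fails, the fallback is pointwise feasibility for each fixed solution, which is what the statement requires once $\bar{M}$ is allowed to depend on the problem data.

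\emph{Optimality.} Let $\tilde{\theta}$ solve \eqref{eq_mixmpc} and let $\theta^*$ solve \eqref{eq_smpc}. I would write
\[
J_P(\tilde{\theta})-J_P(\theta^*)=\big[J_P(\tilde{\theta})-J_{\hat{P}}(\tilde{\theta})\big]+\big[J_{\hat{P}}(\tilde{\theta})-J_{\hat{P}}(\theta^*)\big]+\big[J_{\hat{P}}(\theta^*)-J_P(\theta^*)\big].
\]
The first and third brackets are each bounded by \eqref{eq_cost_cost_bound}, which gives at most $\sum_k\|Q_k\|\bar{r}_{k|t,\max}\beta_{k|t,\max}$ apiece and hence the factor $2$. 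The middle bracket is $\le0$ by optimality of $\tilde{\theta}$, since the objectives differ only by a constant. This needs $\theta^*$ to be feasible for \eqref{eq_mixmpc}, and that is a subtle point. I would handle it either by interpreting optimality over the common feasible set, or by choosing $\bar{\epsilon}_k$ and $M$ so that the reverse application of \eqref{eq_prob_bound} places $\theta^*$ in the approximate feasible set. I would note explicitly which interpretation is used.
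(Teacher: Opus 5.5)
\paragraph{Same route as the paper.} Your plan matches the paper's proof. Feasibility comes from pushing $\beta_{k|t}$ below $\sqrt{\pi/2}\,\|a_{k|t}\|_{\Sigma_\eta}\|a_{k|t}\|^{-1}(\epsilon_k-\bar{\epsilon}_k)^2$ and applying Proposition~\ref{thm2}. Optimality comes from the same three-bracket split, with Proposition~\ref{thm_cost_dec} applied to the two outer brackets. Two of your steps are more careful than the paper, which only cites Proposition~3 of \cite{Adams2025FormalNoise} and never addresses uniformity in $\theta_{\cdot|t}$: your Lipschitz unrolling of $\beta_{k|t}$, and the $\theta$-independent bound $\|a_{k|t}\|\,\|a_{k|t}\|_{\Sigma_\eta}^{-1}\le\lambda_{\max}(\Sigma_\eta^{-1})^{1/2}$. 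The compact-$\Theta$ route is what is needed to get a single $\bar{M}$ covering \emph{all} feasible solutions. Your pointwise fallback would not match the quantifier order of the statement.

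\paragraph{The gap: the middle bracket.} The paper shares this gap. Its proof simply asserts $\hat{J}_{\hat{P}|t}(\tilde{\theta}_{\cdot|t})-\hat{J}_{\hat{P}|t}(\theta^*_{\cdot|t})\le 0$ because $\tilde{\theta}_{\cdot|t}$ minimizes $\hat{J}_{\hat{P}|t}$. You correctly see that this requires $\theta^*_{\cdot|t}$ to be feasible for \eqref{eq_mixmpc}. However, your second remedy cannot work. If a chance constraint is active at $\theta^*_{\cdot|t}$, reversing \eqref{eq_prob_bound} bounds its $\hat{\mathbb{P}}$-violation only by $\epsilon_k$ plus a nonnegative term. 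That never certifies a violation $\le\bar{\epsilon}_k$, and the feasibility part forces $\bar{\epsilon}_k<\epsilon_k$.

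In fact \eqref{opti_bound_eq} is false as stated. Take $n_x=n_u=1$, $f(x,u)=x+u$, $x_t=0$, $\pi(x,\theta)=\theta$, $P_\eta=\mathcal{N}(0,\sigma^2)$, $N=1$, $h_N(x)=x$, $\ell_N(x)=(x-r)^2$ with $r>0$, and $0<\bar{\epsilon}_1<\epsilon_1<1/2$. By \eqref{eq_mixprop_1}, $\hat{P}_{x_{1|t}}=P_{x_{1|t}}=\mathcal{N}(\theta_{0|t},\sigma^2)$ for every $M$. So $\beta_{1|t}$ can be taken arbitrarily small, and the right-hand side of \eqref{opti_bound_eq} can be made arbitrarily small. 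Yet $\tilde{\theta}_{0|t}=-\sigma\Phi^{-1}(1-\bar{\epsilon}_1)<\theta^*_{0|t}=-\sigma\Phi^{-1}(1-\epsilon_1)<0<r$. Hence the left-hand side equals $(\tilde{\theta}_{0|t}-r)^2-(\theta^*_{0|t}-r)^2>0$, a fixed positive number independent of $M$.

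\paragraph{What can be proved.} Only your first remedy gives something provable. The three-bracket argument goes through verbatim if $\theta^*_{\cdot|t}$ is replaced by any point feasible for \eqref{eq_mixmpc}. One choice is the minimizer of $J_{P|t}$ over that set. Another is an optimizer of \eqref{eq_smpc} with $\epsilon_k$ tightened to $\bar{\epsilon}_k$ minus the square-root term of \eqref{eq_prob_bound}. This is a genuinely weaker claim than the theorem, and your write-up should present it that way rather than as a proof of \eqref{opti_bound_eq}.
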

\begin{proof}
For globally Lipschitz dynamics, Proposition 3 of \cite{Adams2025FormalNoise} implies that for the case of no distributional uncertainty, the bounds $\beta_{k|t}$ can be made arbitrarily small by reducing the discretization error of the mixture distributions. As this is controlled by the number of components in the mixtures for our controlled setting too, we can always find $\bar{M}\in\mathbb{N}_+$ such that for all $M\ge \bar{M}$, we have $\beta_{k|t} \le  \sqrt{\pi/2}\,\|a_{k|t}\|_{\Sigma_\eta}\,\|a_{k|t}\|^{-1}(\epsilon_{k} - \bar{\epsilon}_{k})^2$. This implies feasibility of the nominal chance constraints through Proposition \ref{thm2}, as:
\begin{align*}
    \mathbb{P}_{x_{k|t}}(h_k(x_{k|t})\le 0) &\ge \hat{\mathbb{P}}_{x_{k|t}}(h_k(x_{k|t})\le0)\\ &\quad- (\sqrt{2/\pi}\,\|a_{k|t}\|_{\Sigma_\eta}^{-1}\,\|a_{k|t}\|\,\beta_{k|t})^{1/2} \\&
    \ge  1 - \bar{\epsilon}_k - ((\epsilon_k - \bar{\epsilon}_k)^2)^{1/2} = 1 -\epsilon_k
\end{align*}
For the sub-optimality bound \eqref{opti_bound_eq}, take
\begin{align}
\label{eq_interm_proof_thm3}
J_{P|t}(\tilde{\theta}_{\cdot|t}) - J_{P|t}(\theta^*_{\cdot|t}) = \left[\hat{J}_{\hat{P}|t}(\theta^*_{\cdot|t}) - J_{P|t}(\theta^*_{\cdot|t}) \right] + \nonumber\\ \left[J_{P|t}(\tilde{\theta}_{\cdot|t}) - \hat{J}_{\hat{P}|t}(\tilde{\theta}_{\cdot|t}) \right] + \left[\hat{J}_{\hat{P}|t}(\tilde{\theta}_{\cdot|t})- \hat{J}_{\hat{P}|t}(\theta^*_{\cdot|t})\right]
\end{align}
The last term is non-positive as $\tilde{\theta}_{\cdot|t} = \arg\min\hat{J}_{\hat{P}|t}(\theta_{\cdot|t})$. 
From Proposition \ref{thm_cost_dec} for quadratic costs, we have that:
\[
\hat{J}_{\hat{P}|t}(\theta^*_{\cdot|t})- \hat{J}_{\hat{P}|t}(\tilde{\theta}_{\cdot|t}) = J_{\hat{P}|t}(\theta^*_{\cdot|t})- J_{\hat{P}|t}(\tilde{\theta}_{\cdot|t})
\]
The triangle inequality in \eqref{eq_interm_proof_thm3} then yields: \vspace{-1mm}
\begin{align*}
J_{P|t}(\tilde{\theta}_{\cdot|t}) - J_{P|t}(\theta^*_{\cdot|t}) \le \left|J_{\hat{P}|t}(\theta^*_{\cdot|t}) - J_{P|t}(\theta^*_{\cdot|t}) \right| +\\ \left|J_{P|t}(\tilde{\theta}_{\cdot|t}) - J_{\hat{P}|t}(\tilde{\theta}_{\cdot|t}) \right|
\end{align*}
Applying Proposition \ref{thm_cost_dec} to both terms and taking the stage-wise maxima of $\bar{r}_{k|t}$ and $\beta_{k|t}$ results in \eqref{opti_bound_eq}.
\end{proof}
\begin{corollary}[Asymptotic convergence]
As $M\to\infty$, the sub-optimality gap of \eqref{opti_bound_eq} asymptotically goes to zero. \label{cor_1}
\end{corollary}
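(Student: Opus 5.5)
The plan is to read the corollary directly off the bound \eqref{opti_bound_eq} of Theorem \ref{thm_conv}. Since \(\bar{\epsilon}_k<\epsilon_k\) is fixed, that bound holds for every \(M\ge\bar{M}\). Its right-hand side is the finite sum \(2\sum_{k=1}^N \|Q_k\|\,\bar{r}_{k|t,\max}\,\beta_{k|t,\max}\). The horizon \(N\) and the weights \(\|Q_k\|\) do not depend on \(M\). So it is enough to show two things: each \(\beta_{k|t,\max}\to 0\) as \(M\to\infty\), and each \(\bar{r}_{k|t,\max}\) stays bounded uniformly in \(M\). Since the left-hand side of \eqref{opti_bound_eq} is non-negative (by optimality of \(\theta^*_{\cdot|t}\) and the feasibility of \(\tilde{\theta}_{\cdot|t}\) for \eqref{eq_smpc} guaranteed by the first part of the theorem), the gap is then squeezed to zero.

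For \(\beta_{k|t}\to 0\), I will reuse the argument from the proof of Theorem \ref{thm_conv}. Because \(f_\pi\) is globally Lipschitz, Proposition 3 of \cite{Adams2025FormalNoise} shows that \(\beta_{k|t}\) is a combination of the per-step discretization errors of \(\hat{P}_{x_{l|t}}\), \(l<k\), amplified by powers of the Lipschitz constant. Each discretization error vanishes as the number of grid points \(m=M/m_\eta\) grows, since the Voronoi-based quantization of a fixed Gaussian mixture converges in \(W_2\) \cite{Graf2000FoundationsDistributions}. Hence \(\beta_{k|t}\to0\). One must ensure this holds uniformly over the two parameter sequences \(\tilde{\theta}_{\cdot|t}\) and \(\theta^*_{\cdot|t}\). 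For \(\theta^*_{\cdot|t}\) this is immediate because it is fixed. For \(\tilde{\theta}_{\cdot|t}=\tilde{\theta}^{(M)}_{\cdot|t}\), which depends on \(M\), I would restrict to \(\Theta\) compact, or argue directly from the explicit form of the bound, which depends on \(\theta\) only through the Lipschitz constant and the quantization error.

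For \(\bar{r}_{k|t}\), I will bound \(\mathbb{E}_{\hat{P}_{x_{k|t}}}[\|x-r_{k|t}\|^2]\) by \(2\mathbb{E}_{P_{x_{k|t}}}[\|x-r_{k|t}\|^2]+2\beta_{k|t}^2\), using the \(W_2\) triangle inequality: \(\sqrt{\mathbb{E}_{\hat P}[\|x-r\|^2]}\le W_2(\hat P,\delta_r)\le W_2(P,\delta_r)+\beta_{k|t}\). The true second moment is finite because of Lipschitz dynamics and Gaussian-mixture noise, and a Gronwall-type recursion bounds it by a constant independent of \(M\). So \(\bar{r}_{k|t,\max}\) is bounded as \(M\to\infty\).

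The main obstacle is this uniformity in \(\theta\). Both \(\bar{r}\) and \(\beta\) are evaluated at the \(M\)-dependent minimizer \(\tilde{\theta}_{\cdot|t}\), so pointwise convergence for each fixed \(\theta\) is not enough. My first attempt will assume (or note as implicit) that \(\Theta\) is compact and that \(\pi\), \(r_{k|t}\) are continuous in \(\theta\). Then the Lipschitz-based error recursion and the moment bounds are uniform over \(\Theta^N\), which gives \(\sup_\theta \beta_{k|t}\to0\) and \(\sup_\theta\bar{r}_{k|t}<\infty\). The product then tends to zero, which proves the corollary.
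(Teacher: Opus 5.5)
Your argument uses the same core step as the paper: $\beta_{k|t}\to0$ as $M\to\infty$ for globally Lipschitz $f_\pi$ (Proposition 3 of \cite{Adams2025FormalNoise}), substituted into \eqref{opti_bound_eq}. The paper reaches $\beta_{k|t}\to0$ indirectly. It lets $\bar\epsilon_k\to\epsilon_k$ jointly with $M$ and reads $\beta_{k|t}\le\sqrt{\pi/2}\,\|a_{k|t}\|_{\Sigma_\eta}\|a_{k|t}\|^{-1}(\epsilon_k-\bar\epsilon_k)^2\to0$ off the feasibility condition. It says nothing about $\bar r_{k|t,\max}$, nor about $\tilde\theta_{\cdot|t}$ depending on $M$. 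Your direct route fills both omissions. The estimate $\mathbb{E}_{\hat P}[\|x-r\|^2]^{1/2}=W_2(\hat P,\delta_r)\le W_2(P,\delta_r)+\beta_{k|t}$, together with a Lipschitz moment recursion, bounds $\bar r_{k|t,\max}$ uniformly in $M$. You also correctly see that the limit is diagonal, so pointwise convergence in $\theta$ is not enough. However, compactness of $\Theta$ alone does not make the convergence uniform, and \cite{Graf2000FoundationsDistributions} concerns optimal quantizers of a fixed law. What you actually need is that the error of the paper's mode-based grid discretization vanishes uniformly over the tight, moment-bounded, but $M$-dependent family of mixtures $\hat P_{x_{l|t}}$ that can arise. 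The paper merely asserts this, so it is not a gap relative to the paper, but that is where the remaining work lies.

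The choice that genuinely matters is holding $\bar\epsilon_k<\epsilon_k$ fixed and then squeezing the true gap $J_{P|t}(\tilde\theta_{\cdot|t})-J_{P|t}(\theta^*_{\cdot|t})$ to zero. This is a valid deduction from \eqref{opti_bound_eq} taken at face value, but the conclusion is false in general. Take $N=1$, where $\hat P_{x_{1|t}}=P_{x_{1|t}}$, so $\beta_{1|t}$ can be taken arbitrarily small. Let $x_{1|t}=\theta_{0|t}+\eta$ (i.e.\ $f(x,u)=u$, $\pi(x,\theta)=\theta$), with $\eta\sim\mathcal{N}(0,1)$, $\ell_N(x)=(x-10)^2$ and $h_N(x)=x$. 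The terminal chance constraint is then active, and $\tilde\theta_{\cdot|t}$ is optimal at the stricter level $\bar\epsilon_N$. Its true cost therefore exceeds $J_{P|t}(\theta^*_{\cdot|t})$ by a fixed positive amount for every $M$. The bound fails at the step $\hat J_{\hat P|t}(\tilde\theta_{\cdot|t})\le\hat J_{\hat P|t}(\theta^*_{\cdot|t})$ in the theorem's proof. That step needs $\theta^*_{\cdot|t}$ to be feasible for \eqref{eq_mixmpc}, and tightening destroys this.

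So if you only claim that the right-hand side of \eqref{opti_bound_eq} vanishes, your proof stands. If you want the actual sub-optimality to vanish, you must let $\bar\epsilon_k\to\epsilon_k$ as the paper does. The paper presents that coupling as a way to force $\beta_{k|t}\to0$, but its real role is to remove the tightening. You would additionally need to show that the optimal value of \eqref{eq_smpc} at level $\bar\epsilon_k$ converges to the one at level $\epsilon_k$ as $\bar\epsilon_k\uparrow\epsilon_k$.
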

\begin{proof}
    As stated in the proof of Theorem \ref{thm_conv}, we can always find a sufficiently large $M$ such that $\beta_{k|t} \le \sqrt{\pi/2}\,\|a_{k|t}\|_{\Sigma_\eta}\,\|a_{k|t}\|^{-1}(\epsilon_{k} - \bar{\epsilon}_{k})^2$.
    Letting $M\rightarrow\infty$ enables us also to choose $\bar{\epsilon}_k$ arbitrarily close to $\epsilon_k$, thus taking the limit on the above expression as $\bar{\epsilon}_k\rightarrow\epsilon_k$ yields $\beta_{k|t}\rightarrow0$. The result follows after letting $\beta_{k|t}\rightarrow0$ in \eqref{opti_bound_eq}.    
\end{proof}

\revisedText{blue}{We stress that Theorem \ref{thm_conv} and Corollary \ref{cor_1} concern the open-loop problem, and do not by themselves imply recursive feasibility, closed-loop chance-constraint satisfaction, or stochastic stability of the closed-loop receding-horizon implementation. We leave a formal analysis of these for future work, and highlight that they can be recovered by designing terminal components and a constraint tightening \cite{9044326}.}\looseness=-1

%% file: sections/results/implementation.tex
We solve \eqref{eq_mixmpc} using a Sequential Quadratic Programming (SQP) approach. Before solving the optimization problem at time $t$, we use the optimal parameter sequence of $t-1$ to obtain all Gaussian mixtures over the predicted state in the horizon exactly as in Section \ref{sec_mix_prop}. To avoid the non-smooth operations of the mixture propagation inside the optimization, we then keep fixed the number of modes, the local covariances and the grids $\bar{\mathcal{C}}_{k|t}^{(i)}$ inside all SQP iterations as explained in Section \ref{sec_mix_prop}-B. \revisedText{blue}{Note that this choice does not break the theoretical results presented above; it only has an effect in increasing the Wasserstein error-radii $\beta_{k|t}$ and thus making the bounds more conservative.} We finally introduce extra variables for the modes, $s_{k|t}^{(i)}$, along with the constraints $\nabla\hat{p}_{x_{k|t}}(s_{k|t}^{(i)})=0$ that transform \eqref{eq_mk_prop} to a discrete implicit DAE.
The mixture propagation is implemented in \emph{PyTorch} \cite{paszke2017automatic}, leveraging the package provided by \cite{Adamsdiscretizedistributions2025}. We use \emph{CasADi} \cite{Andersson2019} and \emph{acados} \cite{Verschueren2021} to symbolically formulate and solve the resulting SQP. 
In our current implementation, we also keep the weights constant to further reduce the computational complexity \revisedText{blue}{and highlight again that this affects our results only via an increase in the Wasserstein approximation-error bounds $\beta_{k|t}$.} \revisedText{blue}{The NLP carries only $N$ chance constraints  independent of $M$. Per SQP iteration, the propagation requires $N\cdot m$ evaluations of $f_\pi$ and its Jacobian, while the work to derive the Laplace covariances \eqref{eq_local_mode_cov} and complete the mode-search scales as $\sim n_x^3$ and $\sim n_x^2$ respectively, and linearly in the number of components $M$, the number of modes, and the horizon $N$. Thus the dependence on $n_x$ is the limiting factor, since the per-mode rectangular grids also grow exponentially in $n_x$ (cf. Remark 1).}
\revisedText{blue}{We provide the code and reproducibility details at \url{https://github.com/kprattis/mixmpc}.}


%% file: sections/results/simulations.tex
We evaluate our method with $m=3$ on a set of two nonlinear benchmarks, both for the cases of unimodal Gaussian and bimodal mixture disturbance distributions. We also compare our method with 3 standard MPC variants: a) a Nominal MPC that ignores the disturbance, b) a Stochastic MPC that propagates only the first two moments to derive a Gaussian approximation for the predicted state distribution \cite{Kabzan2019Learning-basedRacing}, and c) a Scenario-based MPC \cite{SCHILDBACH20143009} with $N_s=20$ scenarios.
We examine all control algorithms on the same set of $20$ disturbance realizations for each task. All experiments were performed on an Intel Core i7-1265U CPU with 8 GB RAM.\looseness=-1
\subsubsection{Double vortex 2-D system}
We first consider the two dimensional system with dynamics 
\begin{equation}
\dot{x}  = \begin{cases}
    A \, (x - p_1) + u, &x_1\ge 0 \\
    A^T (x - p_2) + u, &x_1 < 0
\end{cases}    
\end{equation}
for $A=-0.4\,\sqrt{2}\,\begin{bmatrix}1,1;-1,1\end{bmatrix}$, $p_1 = [1.0, 0.0]^T$ and $p_2=[-0.5, -0.25]^T$. In this example, we aim to minimize the cost \(\sum_{k=1}^N \|x_{k|t} - p_1\|_Q^2 + \|u_{k|t}\|_R^2,\) with $Q=I_2$ and $R=0.8\,I_2$, while satisfying constraints of the form $\|x - c_i\|_{E_i}^2 \ge 1$, for $i=1, 2$. In other words, we demand the state to remain outside of two elliptic obstacles centered at $c_2=p_2, \,c_1=[0.5,-0.6]^T$ determined by $E_1, E_2 \in \mathbb{S}_{++}^2$. We then discretize the system with RK4 ($dt=0.1$) and add the discrete disturbance $\eta_t$, distributed according to $\mathcal{N}(0_2, \Sigma_\eta)$ for the unimodal and $0.5\,\mathcal{N}([0.05, 0]^T, \Sigma_\eta) + 0.5\,\mathcal{N}([-0.05,0]^T, \Sigma_\eta)$ for the bimodal case respectively, with $\Sigma_\eta=10^{-3}I_2$. Finally, we use $u_t = \pi(x_t, \theta_t)= \theta_t$ with $|\theta_t|\le \mathbf{1}_{n_u}$, set the horizon to $N=10$ and the satisfaction probability to $1-\epsilon_k = 0.99$ for all $k$. \revisedText{blue}{We highlight that since this example has non-affine constraints (and in fact non-convex), it demonstrates the practical value of our proposed methodology also in a setting where the assumptions of Theorem \ref{thm_conv} are violated, as they would require affine constraints.}
\begin{table}[b!]
\vspace{-5mm}
\renewcommand{\arraystretch}{1.2}
\centering
\caption{Results for the double vortex benchmark}
\label{tab:mpc_results_dv}
\setlength{\tabcolsep}{2pt} 
\begin{tabular}{@{} c c c c @{}}
\toprule
\textbf{MPC-Type} & \textbf{Cost} & \textbf{Max. Viol.} & \textbf{Total/QP (ms)} \\
\midrule
\multicolumn{4}{l}{\textit{Unimodal Disturbance}} \\
Nominal           & -     &  $0.75 \pm 0.23$          & $1.2/0.5$ \\
Gaussian          & $46.6 \pm 4.2$    &   $0.0 \pm 0.0$          & $3.8/2.6$ \\
Scenario          & $29.8 \pm 2.67$     & $0.12 \pm 0.29$          & $156/146$ \\
\textbf{Proposed} & $\mathbf{29.6 \pm 2.15}$ & $\mathbf{0.0 \pm 0.0}$ & $\mathbf{99.7/14.5}$ \\
\midrule
\multicolumn{4}{l}{\textit{Bimodal Disturbance}} \\
Nominal           & -    & $0.67 \pm 0.25$          & $0.9/0.3$ \\
Gaussian          & $48.1 \pm 5.0$    & $0.29 \pm 0.38$          & $6.9/6.9$ \\
Scenario          & $42.6 \pm 4.8$    & $0.17 \pm 0.3$          & $136/135$ \\
\textbf{Proposed}  & $\mathbf{44.9 \pm 13}$  & $\mathbf{0.0 \pm 0.0}$ & $\mathbf{249/51}$ \\
\bottomrule
\end{tabular}
\end{table}
\begin{figure}[t!]
    \centering
    \vspace{-5mm}
\includegraphics[width=0.48\textwidth]{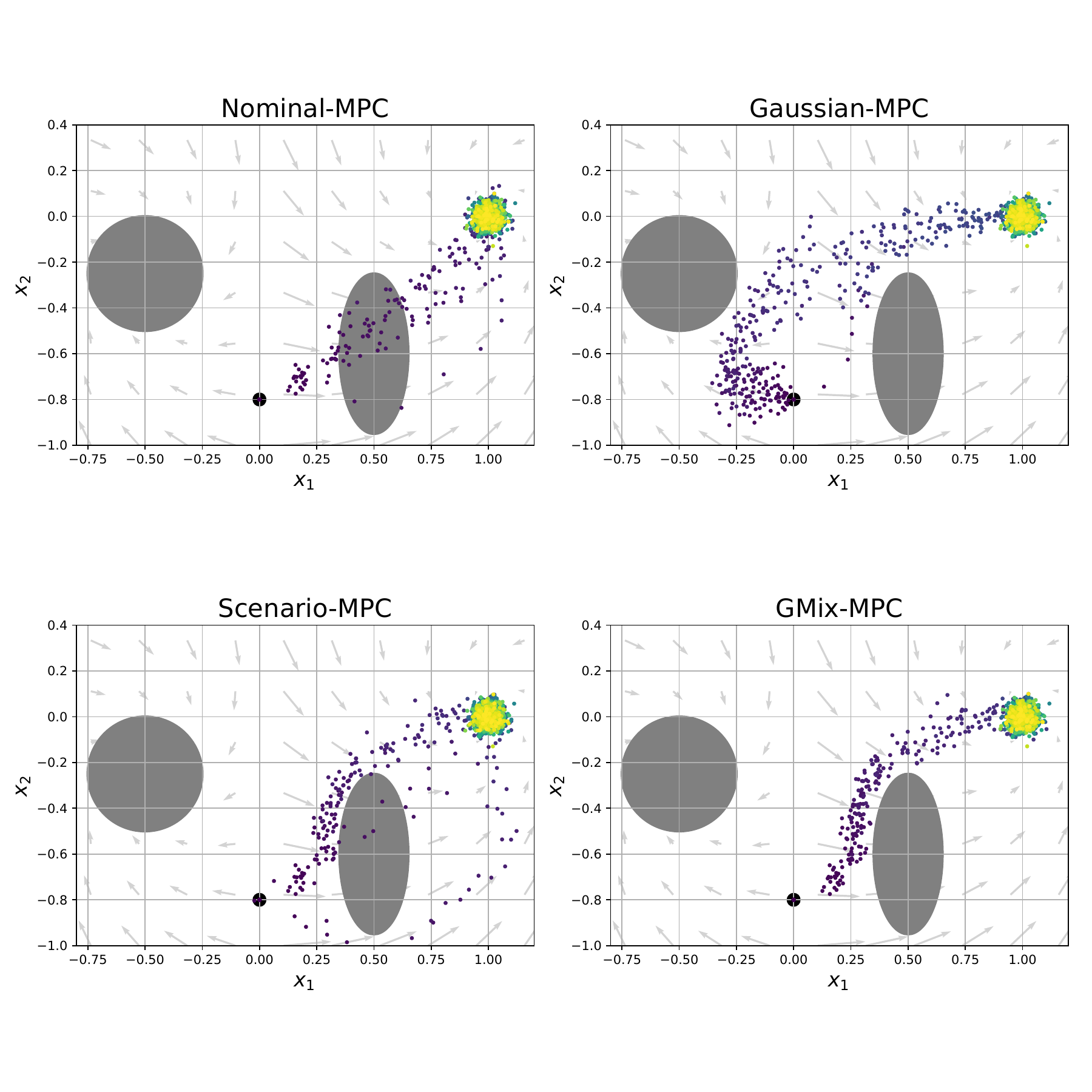}
    \vspace{-13mm}
    \caption{{Double vortex, unimodal noise. Gaussian propagation is overly conservative and forces
trajectories into the left half plane, while ours stays in the right half plane (see Table~\ref{tab:mpc_results_dv}).}}
    \label{fig_dv_results}
    \vspace{-8mm}
\end{figure}
\begin{figure}[t!]
    \centering
\includegraphics[width=0.45\textwidth]{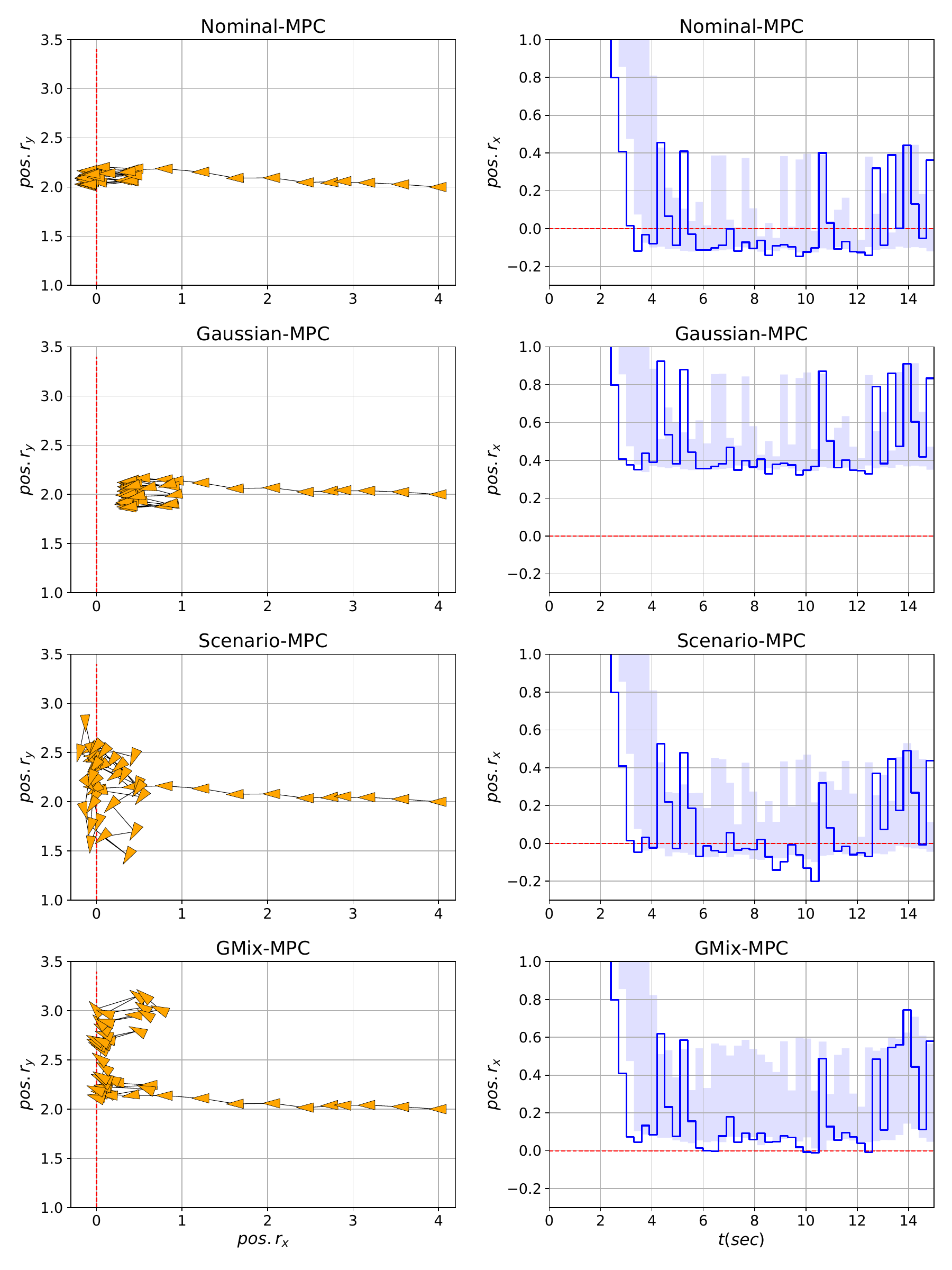}
    \vspace{-3mm}
    \caption{{\hspace{-0.7em}
    Planar robot, bimodal noise. Left: indicative trajectory per controller. Right: distance
from $r_x=0$, with the $80$--$20$ quantile in light blue.
    }
    }
    \vspace{-3mm}
    \label{fig_pr_res}
\end{figure}
\begin{table}[t!]
\renewcommand{\arraystretch}{1.2}
\centering
\caption{Results for the planar robot benchmark}
\label{tab:mpc_results_pr}
\setlength{\tabcolsep}{2pt} 
\begin{tabular}{@{} c c c c @{}}
\toprule
\textbf{MPC-Type} & \textbf{Cost} & \textbf{Max. Viol.} & \textbf{Total/QP (ms)} \\
\midrule
\multicolumn{4}{l}{\textit{Unimodal Disturbance}} \\
Nominal           & -   & $0.06 \pm 0.01$          & $6.2/3.6$ \\
Gaussian          & $79.9 \pm 2.9$    & $0.00 \pm 0.00$          & $15.1/9.7$ \\
Scenario         & $80.8 \pm 2.41$     & $0.1 \pm 0.01$          & $549.6/446.1$ \\
\textbf{Proposed} & $\mathbf{80.0 \pm 2.9}$ & $\mathbf{0.00 \pm 0.00}$ & $\mathbf{86.4/11.8}$ \\
\midrule
\multicolumn{4}{l}{\textit{Bimodal Disturbance}} \\
Nominal           & -    & $0.16 \pm 0.02$          & $7.4/4.3$ \\
Gaussian          & $92.0 \pm 12.8$   & $0.00 \pm 0.00$          & $27.1/16.2$ \\
Scenario         & -     & $0.16 \pm 0.07$          & $665.3/581.8$ \\
\textbf{Proposed} & $\mathbf{86.8 \pm 10.0}$ & $\mathbf{0.07 \pm 0.18}$ & $\mathbf{482.8/207.0}$ \\
\bottomrule
\end{tabular}
\vspace{-4mm}
\end{table}
The results are summarized in Figure \ref{fig_dv_results} and Table \ref{tab:mpc_results_dv}, where we provide the average cost over the valid trajectories (or '-' when all trajectories are invalid) and the average maximum violation with their standard deviations, as well as average total and QP solver times. We can see that our proposed MPC framework (GMix-MPC) offers a practical trade-off between cost optimality, constraint satisfaction, and computational time. When compared to the Gaussian MPC variant, the controller avoids overly conservative actions while remaining safe, due to its more accurate uncertainty propagation, as can be seen in Figure \ref{fig_dv_results}. Table \ref{tab:mpc_results_dv} also confirms that our proposed controller achieves comparable performance to the sampling-based MPC only using $m=3$ locations to form the mixtures, while the Nominal MPC fails to provide a valid trajectory due to its ignorance of the underlying stochasticity. Note that we use slacks for the constraints to allow the experiment's continuation when a trajectory becomes invalid.
\subsubsection{Planar robot 3-D system}
We also consider the planar robot system governed by the Dubins car dynamics studied in \cite{9993720}. The goal is to drive the robot with state $x=[r_x, r_y, \phi]^T \in \R^3$ and control input $u= [v, \omega]^T\in \R^2$
as close to the $r_x=0$ axis as possible without ever exceeding it, by minimizing the cost $|{r_x}|^2 + 10^{-6}\|u\|^2$ while satisfying the (soft) constraint $r_x \ge 0$ at all times. We discretize the system with RK4 ($dt=0.3$) and add the discrete disturbance, distributed according to $\mathcal{N}(0_3, \Sigma_\eta)$ for the unimodal and $0.8\,\mathcal{N}([-0.1, 0, 0]^T, \Sigma_\eta) + 0.2\,\mathcal{N}([0.4, 0, 0]^T, \Sigma_\eta)$ for the bimodal experiment, with $\Sigma_\eta=10^{-3}I_3$. Once again, we use $u_t = \pi(x_t, \theta_t)= \theta_t$ and the constraint $|\theta_t|\le \mathbf{1}_{n_u}$, set the horizon to $N=15$ and $1-\epsilon_k = 0.99$.

\addtolength{\textheight}{-0.5cm}
The results are summarized in Figure \ref{fig_pr_res} and Table \ref{tab:mpc_results_pr}, where the equivalent quantities as in the first example are reported. Notice in Figure \ref{fig_pr_res} how we are able to approach much closer to the $r_x=0$ line while keeping violations to the minimum, in contrast to the Gaussian MPC that stops relatively far from the desired target, in order to prioritize safety at the cost of conservatism. The Nominal MPC once again fails to provide a valid trajectory for both disturbance distributions, while the Scenario approach also is unable to yield a safe trajectory for the bimodal noise case. Our approach appears to be more efficient than the Scenario approach in the unimodal case, where the latter seems to require more than $20$ scenarios to reach the same level of constraint satisfaction whereas our approach needed just $m=3$ locations. \revisedText{blue}{Note that, although the constraints here are affine, we observe a larger closed-loop constraint violation than expected. We attribute this to the open-loop nature of the guarantees of Theorem \ref{thm_conv}, which moreover requires a sufficiently large $M$, and thus an $m$ potentially larger than the practical choice of $m=3$.}\looseness=-1

\revisedText{blue}{In theory, both the Scenario approach and ours will converge to the optimal solution as $N_s$ and $M$ grow,
at increasing computational cost.} However, notice the QP timings in Tables \ref{tab:mpc_results_dv} and \ref{tab:mpc_results_pr}, which indicate that our proposed framework is solved faster as it results in smaller optimization problems, where one constraint is imposed for all mixture components, in contrast to the per-scenario constraints needed for the Scenario-based approach. Finally, the overhead in total time is mainly due to the linearization and forward propagation of the mixtures; \revisedText{blue}{the per-component evaluations of $f_\pi,~\ell_\pi,~ h_\pi$, of their sensitivities, and the $|S_{k|t}|$ mode searches, are independent at each stage and thus directly parallelizable, which we expect to substantially reduce this overhead.}


%% file: sections/conclusion.tex
This work presented a tractable SNMPC reformulation for multi-modal additive disturbances using Gaussian mixture distributions for the predicted state. Utilizing Wasserstein bounds on the approximation error, we provided formal guarantees of correctness for our approach and quantified the sub-optimality of the reformulated problem's optimal solution in the context of the original SNMPC, for the case of quadratic-affine cost-constraint pairs. Numerical results on both tasks confirm that the proposed approach \revisedText{blue}{compares favorably with common SNMPC methods on the trade-off between optimality, scalability, and minimizing chance constraint violations, both for unimodal and multi-modal disturbances}. Future work will focus \revisedText{blue}{on establishing closed-loop guarantees, on improving the scalability and robustness of the solver implementation}, but also on extending the method to state-dependent noise and settings with distributional uncertainty.\looseness=-1